\documentclass[11pt,doublespace]{article}
\usepackage{amsfonts}
\usepackage{amsmath}
\usepackage{graphicx}
\usepackage{algorithmic}
\usepackage{algorithm}
\usepackage{MnSymbol}
\title{Weighted Elastic Net Penalized Mean-Variance Portfolio Design and Computation}

\author{Michael Ho\footnotemark[2] \and Zheng Sun\footnotemark[3] \and Jack Xin\footnotemark[4]}

\begin{document}
\maketitle
\newcommand{\slugmaster}{%
\slugger{sifin}{xxxx}{xx}{x}{x--x}}
\newenvironment{Prog}{\refstepcounter{Prog}\align}{\tag{P\theProg}\endalign}
\newenvironment{Prog*}{\align}{\endalign}
\renewcommand{\thefootnote}{\fnsymbol{footnote}}
\footnotetext[2]{Department of Mathematics, UC Irvine, Irvine, CA 92697, USA. Email: mtho1@uci.edu.}
\footnotetext[3]{The Paul Merage School of Business, UC Irvine, Irvine, CA 92697. Email: zsun@merage.uci.edu.}
\footnotetext[4]{Department of Mathematics, UC Irvine, Irvine, CA 92697, USA. Email: jxin@math.uci.edu.}
\footnotetext[5]{The work was partially supported by NSF grant DMS-1211179.}

\renewcommand{\thefootnote}{\arabic{footnote}}
\newcommand{\overbar}[1]{\mkern 1.5mu\overline{\mkern-1.5mu#1\mkern-1.5mu}\mkern 1.5mu}
\newcommand{\Ell}[1]
{  \ell_{#1}  }
\newcommand{\supp}[1]
{  \textnormal{supp}({#1})  }
\newcommand{\suppEpp}[2]
{  \textnormal{supp}_{#2}({#1})  }
\newcommand{\sign}[1]
{  \textnormal{sgn}({#1})  }
\newcommand{\Elln}[2]
{  ||#1||_{\ell_{#2}}  }
\newcommand{\EllnW}[3]
{  ||#1||_{#2,\ell_{#3}}  }

\newcommand{\EllnBig}[2]
{  \Big|\Big|#1 \Big|\Big|_{\ell_{#2}}  }
\newcommand{\E}
{ \mathbb{E} }
\newcommand{\seq}[2]
{\{#1_{n}\}_{n=1}^{#2} }
\newcommand{\Ft}
{\mathcal{F}_{t}}
\newcommand{\Yt}
{Y_{t}}
\newcommand{\hYt}
{\hat{Y}_{t}}
\newcommand{\SigEp}
{\Sigma^{\epsilon}}
\newcommand{\wTil}
{\tilde{w}}
\newcommand{\Rtil}
{\tilde{R}}
\newcommand{\vect}[1]
{\vec{#1}}
\newcommand{\partialDer}[1]
{ \frac{\partial}{\partial #1} }
\newtheorem{theo}{Theorem}
\newtheorem{theorem}[theo]{Theorem}
\newtheorem{lemma}[theo]{Lemma}
\newtheorem{prop}[theo]{Proposition}
\newtheorem{corr}[theo]{Corollary}
\newtheorem{proof}{Proof}
\newtheorem{Def}{Definition}
\newtheorem{Remark}{Remark}
\newtheorem{definition}{Definition}
\newtheorem{Conj}{Conjecture}
\newtheorem{example}{Example}

\begin{abstract}
It is well known that the out-of-sample performance of Markowitz's mean-variance portfolio criterion can be negatively affected by estimation errors in the mean and covariance.  In this paper we address the problem by regularizing the mean-variance objective function with a weighted elastic net penalty.   We show that the use of this penalty can be motivated by a robust reformulation of the mean-variance criterion that directly accounts for parameter uncertainty.   With this interpretation of the weighted elastic net penalty we derive data driven techniques for calibrating the weighting parameters based on the level of uncertainty in the parameter estimates.  We test our proposed technique on US stock return data and our results show that the calibrated weighted elastic net penalized portfolio outperforms both the unpenalized portfolio and uniformly weighted elastic net penalized portfolio.

This paper also introduces a novel Adaptive Support Split-Bregman approach which leverages the sparse nature of $\ell_{1}$ penalized portfolios to efficiently compute a solution of our proposed portfolio criterion.  Numerical results show that this modification to the Split-Bregman algorithm results in significant improvements in computational speed compared with other techniques.\\
{\tiny This article is to appear in SIAM J. Financial Mathematics.}
\end{abstract}

\pagestyle{myheadings}
\thispagestyle{plain}
\markboth{M. HO, Z. SUN, AND J. XIN}{WEIGHTED ELASTIC NET PENALIZED PORTFOLIOS}

\section{Introduction}
  The birth of modern portfolio theory occurred in 1952 with the seminal publication of Harry Markowitz's criterion \cite{Markowitz} for optimal single period portfolio construction that balances a portfolio's risk with return potential.  A key assumption in modern portfolio theory is that given two portfolios with the same expected return an investor will always choose the portfolio with minimal risk.   Markowitz proposed using the variance of portfolio's return as the measure of the portfolio risk.   Thus Markowitz formulated the portfolio selection problem as minimizing portfolio return variance subject to a minimum expected value of return.   Mathematically the Markowitz formulation can be written as a quadratic programming problem and the optimal portfolio can be computed using a variety of quadratic programming methods \cite{BoydConvex,NocedalBook}.

One shortcoming of the Markowitz criterion for portfolio optimization is that it requires the practitioner to specify the expected return of each asset and the covariance of the returns of different assets.  This presents a problem to an investor because the future mean and covariance matrix are not known.  If incorrect parameter values are used then the portfolio performance will be sub-optimal \cite{MichaudEnigma,DeMiguelOneOverN}.  This additional risk due to parameter uncertainty is commonly referred to as estimation risk.

An intuitive technique that can be utilized when the mean and covariance are unknown is to estimate the mean and covariance matrix from historical return data \cite{LittermanCovar} and to plug-in the estimated parameters in place of the truth.   One approach to estimating the unknown parameters is to use sample averaging which is maximum likelihood (ML) optimal when the returns are i.i.d and normally distributed.   This approach can be very accurate when the data is normally distributed and sufficient training data is available.   For data that is not normally distributed robust estimation techniques for the covariance matrix can be considered \cite{PalomarTyler,HeroRobust}.

Although the sample average and plug-in approach is intuitive, there are difficulties in effectively implementing it.  The primary difficulty is that there is often a limited amount of relevant historical financial return data available to estimate the mean and covariance.   One reason for the lack of relevant data is that the investments' return statistics can be time-varying.   Thus only a limited amount of past data is relevant in estimating the current mean and covariance.   Since the volatility of assets returns can be large, the parameter estimates obtained from averaging only a small number of samples can be large.   Further complicating the problem is that the covariance matrix can be ill-conditioned.   This makes the portfolio weights extremely sensitive to small parameter errors.    The effect of these estimation errors is risk return performance that departs significantly from the optimal performance under known statistics \cite{DeMiguelOneOverN,Barry1974,Jobson1980}.

As an alternative to sample average estimates, Bayesian estimators for both mean and covariance have been proposed \cite{Frost1986,Jorion1986,LedoitWolf2004}.   These estimators effectively ``shrink'' the sample average estimates towards a more structured estimate (via a convex combination) which takes into account prior knowledge.   Prior knowledge can take the form of structured data models such as a single factor model \cite{Sharpe1963} or the Fama- French three-factor model \cite{FamaFrench1993}.   Shrinking the sample average estimates towards the more structured model reduces the variability in the parameter estimates and can improve out-of-sample portfolio performance.


Another approach that has been shown to improve out-of-sample performance involves regularizing the portfolio selection criterion by adding penalties to the objective function \cite{DeMiguelPORTNORM,SparseStable,OptimalSparse,YenYen} such as portfolio norm penalties.   In \cite{DeMiguelPORTNORM} $\ell_{1}$ and squared $\ell_{2}$ norm constraints are proposed for the minimum variance criterion and a cross-validation procedure is suggested to calibrate the constraints.   In \cite{YenYen,L1L2_london} an elastic net penalty is proposed in the context of constrained minimum variance portfolio optimization.   The authors also derive a method to calibrate the elastic net penalty which is designed to ensure that the variance of the resulting portfolio will not exceed the unpenalized portfolio variance (asymptotically).   In \cite{OptimalSparse} the authors study convex penalties such as a weighted LASSO approach \cite{AdaptiveLasso} and a non-convex SCAD penalty \cite{SCAD} with application to minimum-variance portfolios.   For the weighted LASSO approach the authors propose a calibration scheme where the weights are selected according to the variability in the volatility of each asset.

The above norm constrained and penalty approaches for portfolio optimization primarily focused on the minimum variance approach.   Consequently the calibration and justification of the norm penalties above were derived by considering only the portfolio variance.   The mean return is ignored in the calibration of the penalty.   In this paper we propose a method which can be applied to the mean-variance criterion where both portfolio mean and variance are considered.   In this setting we propose regularizing the objective function with a weighted elastic net penalty.   A weighted elastic net penalty is a linear combination of a portfolio's weighted $\ell_{1}$ norm and the square of a portfolio's weighted $\ell_{2}$ norm.   We show that the use of the weighted elastic net penalty can be justified by reformulating the mean-variance criterion as a robust optimization problem \cite{GoldFarbIyengar,TutKoenig} where the mean and volatilities of the asset returns belong to a known uncertainty set.  With this robust optimization interpretation, data driven techniques for calibrating the weight parameters in the weighted elastic net penalty are derived.

Our proposed penalized criterion which is equivalent to a special case of a robust optimization problem has two advantages over the general robust portfolio optimization problem.   First our method can be solved using fast and well established algorithms for $\ell_{1}$ penalized optimization problems such as the Split-Bregman algorithm \cite{GoldStein1} and the FISTA algorithm \cite{FISTA}.   In the more general case, solving the robust portfolio optimization problem requires using semi-definite programming techniques \cite{InteriorSPD} which are intractable for large portfolios.   Finally, our formulation of the problem results in sparse portfolios which can contribute to reducing portfolio turnover and transaction costs.   The general robust optimization problem does not necessarily result in a sparse portfolio.

This paper also addresses computational aspects of computing weighted elastic net penalized portfolios.   In particular, we propose a novel Adaptive Support Split-Bregman approach to computing weighted elastic-net penalized portfolios.  This new algorithm exploits the sparse nature of elastic net penalized solutions to minimize computational requirements.   We show that this results in significant improvements in convergence speed versus other solvers.

The remainder of this paper's body is organized as follows:  Section 2 introduces the weighted elastic net penalty and provides a justification for its use.   In Section 3 we discuss the Adaptive Support Split-Bregman approach for computing the optimal portfolio.   Section 4 presents experimental results using US equity data that demonstrate the benefit of our proposed approach.   Finally in Section 5 we state our conclusions and a path forward for future work.   The appendix contains proofs of some technical results presented in Section 3.

\section{Portfolio Selection Criteria}\label{Sec:Crit}
In this section we first review the mean-variance portfolio selection criterion.   We then present the weighted elastic-net penalized portfolio selection criterion and motivate its use.
\subsection{Mean-Variance Portfolio selection criterion}   Suppose that there exists a set of $N$ risky assets and let $\{s_{n}(k)\}_{n=1}^{N}$ be the prices of each asset at time $k$.   Then the excess return of the $n^{th}$ asset for time period $k$ is defined as
\begin{equation}
  r_{n}(k)=\frac{s_{n}(k+1) - s_{n}(k)}{s_{n}(k)} -r^{(F)}(k)
\end{equation}
where $r^{(F)}(k)$ is the return of a risk-free asset at time $k$.
We model $\{r_{n}\}_{n=1}^{N}$ as random variables with finite mean and covariance.
A portfolio is defined to be a set of weights $\{w_{n}\}_{n=1}^{N} \subset \mathbb{R}$.  If $w_{i}>0$ a long position has been taken in the $i^{th}$ asset whereas $w_{i}<0$ indicates a short position.

The mean-variance criterion proposed by Markowitz \cite{Markowitz} addresses single period portfolio selection.  A portfolio of risky assets, $w$ is mean-variance optimal if it is a solution to the following optimization problem
\begin{align}\label{eq:MeanVar}
   \min_{w}& \;\; \varphi w^{T}\Gamma w - \mu^{T}w
\end{align}
where $\Gamma$ and $\mu$ are the covariance and mean of $r$ for the time period of interest and where $\varphi>0$ is a risk aversion coefficient (since $\varphi$ will only affect the portfolio weights up to a positive scalar multiple we shall set $\varphi=1$) .      Assuming that $\Gamma$ is symmetric and positive definite we have that \eqref{eq:MeanVar} is a convex quadratic program whose solution, $w^{*}$, satisfies
\begin{equation}
  \Gamma w^{*}=\mu
\end{equation}

Estimation of parameters is necessary to implement the mean-variance criteria.    It has been recognized that estimation of mean return is more difficult than covariance \cite{MertonMean} and thus a minimum variance criterion is often advocated for in recent literature \cite{Ma,DeMiguelPORTNORM,OptimalSparse}.   In the minimum variance criterion the mean of asset returns are ignored and the following criterion is used for portfolio selection
\begin{align}
   \min_{w}& \;\;  w^{T}\Gamma w \nonumber \\
   & \textnormal{s.t. }  \sum_{i=1}^{N}w_{i} = 1.
\end{align}
Despite ignoring all information on the mean return, the minimum variance criterion often outperforms the mean-variance criterion when judged by out-of-sample Sharpe ratio \cite{DeMiguelPORTNORM,Ma}.

\subsection{Norm Penalized Portfolio optimization}
As was stated in the introduction mean-variance portfolio optimization is sensitive to parameter estimation error.   To address these concerns a number of norm penalized criterions have been proposed, primarily in the context of minimum variance optimization.   Commonly used convex norm penalties include the $\ell_{1}$ norm , squared $\ell_{2}$ norm and elastic net penalties \cite{YenYen}.   The $\ell_{1}$ and squared $\ell_{2}$ norm penalties are given as
\begin{equation}
\lambda \sum_{i=1}^{N}|w_{i}|
\end{equation}
and
\begin{equation}
\lambda \sum_{i=1}^{N}w_{i}^{2}
\end{equation}
respectively where $\lambda>0$ is a weighting factor.   The elastic net penalty is a weighted sum of the $\ell_{1}$ and squared $\ell_{2}$ norm penalties
\begin{equation}\label{eq:ElasticEq}
\lambda_{1} \sum_{i=1}^{N}|w_{i}| +\lambda_{2} \sum_{i=1}^{N}w_{i}^{2}
\end{equation}
where $\lambda_{1},\lambda_{2} >0$.
Another convex penalty is the adaptive LASSO penalty \cite{AdaptiveLasso} which was applied to minimum variance optimization in \cite{OptimalSparse}.   The adaptive LASSO penalty is a weighted $\ell_{1}$ norm given by
 \begin{equation}
   \EllnW{w}{\vect{\beta}}{1} = \sum_{k=1}^{N} \beta_{k}|w_{k}|
 \end{equation}
where $\beta_{k}\ge 0$.
Calibration of the weighting parameters for the above penalties has primarily been studied with the goal of improving the portfolio return variance \cite{YenYen,OptimalSparse}.

Several justifications for using $\ell_{1}$ and squared $\ell_{2}$ norms as penalties and constraints have been given in the literature.   For example in \cite{SparseStable} it is stated that the use of an uniformly weighted $\ell_{1}$ penalty can be motivated by the desire to obtain sparse portfolios and to regularize the mean-variance problem when the covariance is ill-conditioned.    In \cite{VastPortfolio} the authors show that estimation risk in the mean-variance setting due to errors in the mean return estimation is bounded above by
\begin{eqnarray}
  ||\mu-\hat{\mu}||_{\infty}\Elln{w}{1}
\end{eqnarray}
and use that upper bound as a rationale for promoting small $\Elln{w}{1}$.   In \cite{SparseStableParameter} it is mentioned that a benefit of using a uniformly weighted $\ell_{2}$ norm penalty is to stabilize the inverse covariance matrix which is often ill-conditioned in financial applications.

Non-convex penalized minimum-variance portfolio criterions were studied in \cite{OptimalSparse}.   One such penalty examined in \cite{OptimalSparse} is the Softly Clipped Absolute Deviation (SCAD) penalty \cite{SCAD}.  The SCAD penalty is defined as follows
\begin{equation}
\sum_{i=1}^{N} p_{\lambda}(w_{i})
\end{equation}
where
\begin{equation}\label{eqSCAD}
p_{\lambda}(x) = \begin{cases} \lambda|x| &\mbox{if } |x| \le \lambda \\
-\frac{x^{2}-2a_{SCAD}\lambda|x|+\lambda^{2}}{2(a_{SCAD}-1)} &\mbox{if } \lambda <|x| \le a_{SCAD}\lambda \\
\frac{(a_{SCAD}+1)\lambda^{2}}{2} &\mbox{if }  |x|> a_{SCAD}\lambda
\end{cases}
\end{equation}
and where $a_{SCAD} >2$.   This penalty is similar to the $\ell_{1}$ penalty and was initially proposed in context of variable selection.   Calibration of the parameters $a_{SCAD}$ and $\lambda$ in \eqref{eqSCAD} for portfolio optimization has not been fully addressed in the literature.

\subsection{Weighted Elastic Net Penalized Portfolio}\label{sec:WtElastic}
The preceding norm penalties are derived and calibrated primarily from a minimum variance perspective.  In this section we extend the above methods for minimum variance portfolio design to mean-variance portfolios.   Here we propose augmenting the mean-variance criterion with the sum of a weighted $\ell_{1}$ and the square of a weighted $\ell_{2}$ penalty.   The penalty terms in the new portfolio selection criterion will be referred to as a weighted elastic net which was studied in the context of variable selection in \cite{AdaptiveElasticNet}.

Let $\left\{\alpha_{i}\right\}_{i=1}^{N}$ and $\left\{\beta_{i}\right\}_{i=1}^{N}$ be positive real numbers.   Then the weighted elastic net penalty for a portfolio $w$ is
\begin{equation}
\EllnW{w}{\vect{\beta}}{1}+ \EllnW{w}{\vect{\alpha}}{2}^{2}
\end{equation}
 where
 \begin{equation}
   \EllnW{w}{\vect{\beta}}{1} = \sum_{k=1}^{N} \beta_{k}|w_{k}|
 \end{equation}
 and
 \begin{equation}
   \EllnW{w}{\vect{\alpha}}{2}^{2} = \sum_{k=1}^{N} \alpha_{k}|w_{k}|^{2}.
 \end{equation}
 Thus the weighted elastic net penalized mean-variance criterion may be written as
\begin{eqnarray}\label{Prog:WtElasticNet}
  \min_{w}& \;\; w^{T}\hat{\Gamma} w -w^{T}\hat{\mu}+\EllnW{w}{\vect{\beta}}{1}+ \EllnW{w}{\vect{\alpha}}{2}^{2}
\end{eqnarray}
where $\hat{\Gamma}$ and $\hat{\mu}$ are estimates of $\Gamma$ and $\mu$ respectively.
\subsection{Motivation}

 A rationale for augmenting the mean-variance criterion with a weighted elastic net penalty can be obtained by reformulating the mean-variance criterion as a robust optimization problem.   As was stated in the introduction it is well-known that the out-of-sample performance of mean-variance portfolio can degrade significantly when there are errors in the estimate of mean and covariance.   The risk due to estimation errors can be reduced by accounting for them in the optimization criterion.

One way to model the parameter estimation risk is to assume the true covariance and mean belong to the following uncertainty sets
\begin{eqnarray*}
  A = \left\{R : R_{i,j}=\hat{\Gamma}_{i,j}+e_{i,j}
   ;  |e_{i,j}| \le \Delta_{i,j}; R \succeq 0 \right\}
\end{eqnarray*}
\begin{equation*}
  B =\left\{v : v_{i}= \hat{\mu}_{i}+c_{i}; |c_{i}| \le \beta_{i} \right\}
\end{equation*}
where the matrix $\Delta$ is symmetric and diagonally dominant with $\Delta_{i,j} \ge 0$ for all $i,j$.   This condition on $\Delta$ ensures that a matrix, $R$, of the form
\begin{eqnarray*}
  R_{i,j}&=& \begin{cases} \hat{\Gamma}_{i,i} + \Delta_{i,i} \mbox{ if } i = j \\
  \hat{\Gamma}_{i,j} \pm  \Delta_{i,j} \mbox{ if } i \ne j
  \end{cases} \\
  R_{i,j}&=&R_{j,i}
\end{eqnarray*}
is positive semi-definite (i.e. $R \in A$).

Since the mean and covariance are unknown, a conservative approach to selecting a portfolio is to optimize the worse case performance over the uncertainty sets.   This can be written as the following robust optimization problem \cite{GoldFarbIyengar}
\begin{eqnarray} \label{Prog:Robust}
  \min_{w} \max_{R \in A, v \in B} w^{T}Rw -v^{T}w.
\end{eqnarray}
Note that for a fixed $R$ and $v$ this problem is convex in $w$.  Since the pointwise maximum of a family of convex functions remains convex we have that
\begin{eqnarray}
\max_{R \in A, v \in B} w^{T}Rw -v^{T}w
\end{eqnarray}
is convex in $w$.  Performing the inner maximization with respect to $\mu$ reduces the problem to
\begin{eqnarray} \label{Prog:Robust2_p1}
  \min_{w} \max_{R \in A} w^{T}Rw + \sum_{i=1}^{N} \left( -\hat{\mu_{i}}+ \beta_{i}\textnormal{sgn}({w_{i}}) \right)w_{i}
\end{eqnarray}
where
\begin{eqnarray*}
  \textnormal{sgn}(w_{i}) =
  \begin{cases} \frac{w_{i}}{|w_{i}|} & \mbox{ if } w_{i} \ne 0 \\
  0 &\mbox {else}.
  \end{cases}
\end{eqnarray*}
This can be re-written as
\begin{eqnarray*}
  \min_{w} \max_{R \in A} \textnormal{  tr}(Rww^{T}) -w^{T}\hat{\mu}  + \EllnW{w}{\vect{\beta}}{1},
\end{eqnarray*}
and the inner maximization with respect to $R$ can be solved in closed form.   Performing this final maximization gives us the following convex optimization problem
\begin{eqnarray}\label{Prog:Pairwise}
  \min_{w} w^{T}\hat{\Gamma}w  -w^{T}\hat{\mu} +|w|^{T}\Delta|w| + \EllnW{w}{\vect{\beta}}{1}
\end{eqnarray}
where the $N \times 1$ vector $|w|$ is defined as
\begin{eqnarray}
  |w|_{i}=|w_{i}|.
\end{eqnarray}
Thus we see that problem \eqref{Prog:Robust} is equivalent to augmenting the mean-variance criterion with a weighted pairwise elastic net penalty \cite{PairwiseElastic}.

When $\Delta$ equals the diagonal matrix $D_{\alpha}$ where
\begin{eqnarray}
D_{\alpha} = \left( \begin{matrix}
               \alpha_{1} & 0 & \dots & 0 \\
               0 & \ddots & \ddots & \vdots \\
               \vdots & \ddots  & \ddots & 0 \\
               0 & \dots & 0 &\alpha_{N}  \\
             \end{matrix} \right)
\end{eqnarray}
the criterion simplifies to the weighted elastic net penalized problem defined in problem \eqref{Prog:WtElasticNet}
\begin{eqnarray}
  \min_{w} w^{T}\hat{\Gamma}w -w^{T}\hat{\mu}  + \EllnW{w}{\vect{\beta}}{1} + \EllnW{w}{\vect{\alpha}}{2}^{2}
\end{eqnarray}
where $\alpha_{i}=\Delta_{i,i}$.  This observation is summarized in the following theorem:
\begin{theorem}\label{Thm:Equiv}
The weighted elastic net penalized problem in \eqref{Prog:WtElasticNet} is equivalent to the robust optimization problem in \eqref{Prog:Robust}, when $\Delta=D_{\alpha}$.
\end{theorem}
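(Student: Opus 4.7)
The plan is to verify that the chain of reductions carried out in the motivation subsection culminates in exactly the weighted elastic net problem \eqref{Prog:WtElasticNet} once one specializes $\Delta$ to the diagonal matrix $D_\alpha$. Since the excerpt has already walked through the reformulation of the inner maximizations, my proof will essentially assemble those steps, emphasizing the two points where the argument actually has content: the explicit maximization of a linear functional over the box uncertainty set $B$, and the maximization of $w^T R w$ over the covariance uncertainty set $A$.

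First, I would fix $w$ and perform the inner maximization over $v \in B$. Writing $v_i = \hat{\mu}_i + c_i$ with $|c_i|\le \beta_i$, the term $-v^T w = -\hat{\mu}^T w - \sum_i c_i w_i$ is maximized coordinate-wise by $c_i = -\beta_i \,\mathrm{sgn}(w_i)$, contributing $\sum_i \beta_i |w_i| = \EllnW{w}{\vect{\beta}}{1}$. This reduces \eqref{Prog:Robust} to \eqref{Prog:Robust2_p1}. Next I would perform the inner maximization over $R \in A$: expanding $R_{i,j} = \hat{\Gamma}_{i,j}+e_{i,j}$ gives $w^T R w = w^T \hat{\Gamma} w + \sum_{i,j} e_{i,j} w_i w_j$, and the symmetric perturbation that maximizes the residual is $e_{i,j} = \Delta_{i,j}\,\mathrm{sgn}(w_i w_j)$, yielding $|w|^T \Delta |w|$. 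One must check that the maximizing perturbation actually keeps $R$ in $A$, that is, $R\succeq 0$; this is where the assumption that $\Delta$ is symmetric, diagonally dominant and entrywise nonnegative is used, since the authors have observed that under these hypotheses the ``worst case'' $R$ constructed this way is positive semi-definite. The result is exactly the pairwise elastic net problem \eqref{Prog:Pairwise}.

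Finally, I would specialize to $\Delta = D_\alpha$. Since $D_\alpha$ is diagonal with entries $\alpha_i\ge 0$, the pairwise term collapses:
\begin{equation*}
|w|^T D_\alpha |w| \;=\; \sum_{i=1}^{N} \alpha_i |w_i|^2 \;=\; \EllnW{w}{\vect{\alpha}}{2}^{2}.
\end{equation*}
Substituting this into \eqref{Prog:Pairwise} recovers the objective of \eqref{Prog:WtElasticNet} term by term, and since the reductions above were equivalences (not relaxations) the two optimization problems share the same optimal value and optimal $w$, establishing the theorem.

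The only nontrivial point is the PSD feasibility of the maximizing $R$: without it one would only obtain an upper bound on $\max_{R\in A} w^T R w$, and the reduction to \eqref{Prog:Pairwise} would fail to be an equivalence. Verifying diagonal dominance $\Rightarrow$ PSD for the constructed $R$ is the real content; diagonal dominance of $\Delta$ with nonnegative diagonal implies diagonal dominance (hence positive semidefiniteness) of the symmetric matrix whose $(i,j)$ entry is $\hat{\Gamma}_{i,j}\pm\Delta_{i,j}$ for $i\neq j$ and $\hat{\Gamma}_{i,i}+\Delta_{i,i}$ on the diagonal, provided $\hat{\Gamma}$ itself is PSD. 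Everything else is bookkeeping.
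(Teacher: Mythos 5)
Your proposal follows the same route as the paper: maximize over $B$ coordinatewise to produce the weighted $\ell_{1}$ term, maximize over $A$ to produce $|w|^{T}\Delta|w|$, and then specialize $\Delta=D_{\alpha}$ so the pairwise term collapses to $\sum_i \alpha_i|w_i|^2$ --- this is exactly the chain of reductions the paper itself presents as the justification of the theorem, and you correctly isolate the PSD feasibility of the worst-case $R$ as the only point with real content. One small correction to your last paragraph: the worst-case matrix $R=\hat{\Gamma}+E$ need not itself be diagonally dominant even when $\hat{\Gamma}$ is PSD; the argument should be that the perturbation $E$ (entries $\pm\Delta_{i,j}$ off the diagonal, $\Delta_{i,i}$ on it) is symmetric, diagonally dominant with nonnegative diagonal, hence PSD, so $R$ is PSD as a sum of two PSD matrices.
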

\subsection{Data Driven Calibration of Weighting Parameters}\label{SEC:Cal}
We now address the problem of selecting the weighting parameters $\alpha$ and $\beta$.   Recall that Theorem \ref{Thm:Equiv} states that problems \eqref{Prog:WtElasticNet} and \eqref{Prog:Robust} are equivalent.   This implies that $\alpha$ and $\beta$ represent the level of uncertainty in the mean and variance of each asset.   Thus we propose setting $\alpha$ and $\beta$ to be commensurate with the amount of error in our parameter estimates.

Since the amount of error in the parameter estimates are unknown, we need to estimate them.  One approach to estimate the amount of error is the bootstrap method \cite{EfronBoot}.   Bootstrapping is a non-parametric approach that has been applied to portfolio optimization \cite{MichaudBook} and calibration of robust portfolio optimization problems \cite{TutKoenig}.   One advantage of bootstrapping is that it does not require specification of a distribution of the return data.

The first step of bootstrapping is to measure $T_{train}$ time samples of past training data to estimate $\mu_{i}$ and $\Gamma_{i,i}$, using estimators $f_{\mu_{i}}$ and $f_{\Gamma_{i,i}}$ respectively.   Common choices for $f_{\mu_{i}}$ and $f_{\Gamma_{i,i}}$ are sample averages or shrinkage estimators.   Once the parameter estimates are obtained, the training data is resampled with replacement and additional estimates of $\mu_{i}$ and $\Gamma_{i,i}$ are formed using the resampled data.  The resampling can be described by independent uniformly distributed integer valued random variables, $v_{k,m}$, taking values between 1 and $T_{train}$.   Here $k \in \{1, \dots, K\}$ and $m \in \{1, \dots, T_{train}\}$.   Under the condition that the estimators $f_{\mu_{i}}$ and $f_{\Gamma_{i,i}}$ are invariant to the ordering of the training data, the bootstrap estimates of the estimation errors may be defined as
\begin{equation*}
  \mu_{i,err}(k) = \left| f_{\mu_{i}}(r_{i}(v_{k,1}), \dots, r_{i}(v_{k,T_{train}}))  -\hat{\mu}_{i} \right|
\end{equation*}
and
\begin{equation*}
  \Gamma_{i,err}(k) = \left| f_{\Gamma_{i,i}}(r_{i}(v_{k,1}), \dots, r_{i}(v_{k,T_{train}})) -\hat{\Gamma}_{i,i}  \right|
\end{equation*}
respectively.    Here $r_{i}(t)$ is the return of the $i^{th}$ asset in the $t^{th}$ training sample.  The percentiles of the empirical distributions of $\{\Gamma_{i,err}(k)| k=1 \dots K \}$ and $\{\mu_{i,err}(k)| k=1 \dots K \}$ can then be referenced to derive $\alpha_{i}$ and $\beta_{i}$.  For example, suppose $0 \le p_{1},p_{2} \le 1$.   Then the values for $\alpha_{i}$ and $\beta_{i}$ can be defined as
\begin{equation}
  \alpha_{i}= \min \left\{x: \left| \left\{n: \Gamma_{i,err}(n) \le x \right\} \right|\le p_{1}K \right\}
\end{equation}
and
\begin{equation}
  \beta_{i}= \min \left\{x: \left| \left\{n: \mu_{i,err}(n) \le x \right\} \right|\le p_{2}K \right\}
\end{equation}
where $K$ is the number of bootstrap estimates.

An economic interpretation of the percentile parameters $p_{1}$ and $p_{2}$ is that of model estimation risk aversion factors.   Here $p_{1}$ represents the aversion to squared volatility estimation risk and $p_{2}$ is the aversion to mean estimation risk.   A percentile value of 0 corresponds to no aversion to estimation risk whereas a value of 1 corresponds to a high aversion to estimation risk.   Note that a higher aversion to estimation risk will increase the weights in the elastic net.

%
%
%

\section{Numerical methods}
In this section we review some numerical algorithms for determining solutions of \eqref{Prog:WtElasticNet}.  First we review an application of the Split-Bregman algorithm \cite{GoldStein1} for solving \eqref{Prog:WtElasticNet}.   Then we propose a novel Adaptive Support Split-Bregman approach which solves \eqref{Prog:WtElasticNet} faster than the Split-Bregman algorithm by exploiting the sparse nature of the portfolio weights.
\subsection{Optimality and Approximate Optimality Conditions}
In this section we derive approximate optimality conditions for \eqref{Prog:WtElasticNet}.   These conditions are then used to a design a numerical algorithm for determining the solution of \eqref{Prog:WtElasticNet}.

Let $\Psi(w)$ denote the objective function for the weighted elastic net portfolio problem in equation \eqref{Prog:WtElasticNet}
\begin{eqnarray}
\Psi(w)&=&w^{T}\hat{\Gamma} w -w^{T}\hat{\mu}+\EllnW{w}{\vect{\beta}}{1}+ \EllnW{w}{\vect{\alpha}}{2}^{2} \\
&=&w^{T}Rw -w^{T}\hat{\mu}+\EllnW{w}{\vect{\beta}}{1} \nonumber
\end{eqnarray}
where $R=\hat{\Gamma}+ D_{\alpha}$.   Since $\Psi$ is convex, $w^{*}$ minimizes $\Psi$ if and only if
\begin{align}
0 \in \partial\Psi(w^{*})
\end{align}
where $\partial\Psi(w)$ is the sub-gradient of $\Psi$ evaluated at $w$ \cite{BoydConvex}.   Note that since $R$ is positive definite, $\Psi$ is strictly convex and thus there is a unique solution to \eqref{Prog:WtElasticNet}.

In most cases we are only interested in portfolios that are approximately optimal.  Thus we can relax our optimality conditions to derive a stopping criterion for any iterative solver of \eqref{Prog:WtElasticNet}.  Before introducing our relaxed conditions we define the support of a portfolio $w$ as
\begin{equation*}
  \supp{w} = \left\{i : |w_{i}| >  0 \right\}
\end{equation*}
and define the smallest variance uncertainty as
\begin{equation}
\alpha_{o}=\min\{\alpha_{i}:  0 \le i \le N\}.
\end{equation}
With the above definitions we have the following theorem which establishes an approximate optimality condition.
\begin{theorem}\label{THM:ConvCriteria1}
Let $w^{*}$ be the solution of \eqref{Prog:WtElasticNet}.   Suppose that $\tilde{w}$ satisfies
\begin{equation}\label{EQ:gradCondition}
    \sum_{i \in \supp{\tilde{w}}}\left(\frac{\partial}{\partial w_{i}} \big( w^{T}Rw  -w^{T}\hat{\mu} +\EllnW{w}{\vect{\beta}}{1} \big)\Big\mid_{w=\tilde{w}}\right)^{2} \le 2\epsilon \alpha_{o}
\end{equation}
and
\begin{equation}\label{EQ:ABSineq}
   -\beta_{i} \le \frac{\partial}{\partial w_{i}} \big( w^{T}Rw  -w^{T}\hat{\mu} \big)\Big\mid_{w=\tilde{w}} \le \beta_{i}
\end{equation}
for all $i \nin \supp{\tilde{w}}$.   Then
\begin{equation}\label{EQ:PhiIneq}
   \Psi(\tilde{w}) \le \Psi(w^{*}) +\epsilon
\end{equation}
\end{theorem}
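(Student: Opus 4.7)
The plan is to exploit the strong convexity of $\Psi$ together with a careful construction of a sub-gradient of $\Psi$ at $\tilde{w}$ whose Euclidean norm is controlled by the two hypotheses.

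First I would observe that $R = \hat{\Gamma} + D_{\alpha}$ satisfies $\lambda_{\min}(R) \ge \alpha_o$ (since $\hat{\Gamma} \succeq 0$ and $D_{\alpha}$ contributes at least $\alpha_o$ to every eigenvalue). Consequently the quadratic part of $\Psi$ is strongly convex with modulus $2\alpha_o$, and since the remaining pieces $-w^{T}\hat\mu$ and $\EllnW{w}{\vect{\beta}}{1}$ are convex, $\Psi$ itself is $2\alpha_o$-strongly convex. This will give me the bound
\begin{equation*}
\Psi(w^{*}) \ge \Psi(\tilde{w}) + g^{T}(w^{*}-\tilde{w}) + \alpha_o \|w^{*}-\tilde{w}\|_{2}^{2}
\end{equation*}
for any $g \in \partial\Psi(\tilde{w})$.

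Next I would produce a specific $g$ in $\partial\Psi(\tilde{w})$ built from the two hypotheses. For $i \in \supp{\tilde{w}}$, the absolute value $|w_i|$ is differentiable at $\tilde{w}_i$, so the $i$-th component of $g$ must be the ordinary partial derivative appearing in \eqref{EQ:gradCondition}. For $i \nin \supp{\tilde{w}}$, the sub-differential of $|w_i|$ at $0$ is $[-1,1]$, so I can freely select the $i$-th component of the sub-gradient of $\EllnW{w}{\vect{\beta}}{1}$ to be $-\frac{\partial}{\partial w_i}(w^{T}Rw - w^{T}\hat\mu)\big|_{\tilde{w}}$, which is admissible precisely because of \eqref{EQ:ABSineq}. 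With this choice $g_i = 0$ for $i \nin \supp{\tilde{w}}$, so
\begin{equation*}
\|g\|_{2}^{2} = \sum_{i\in \supp{\tilde{w}}}\left(\tfrac{\partial}{\partial w_i}(w^{T}Rw - w^{T}\hat\mu + \EllnW{w}{\vect{\beta}}{1})\big|_{\tilde{w}}\right)^{2} \le 2\epsilon\alpha_o
\end{equation*}
by \eqref{EQ:gradCondition}.

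Finally I combine the two ingredients. Setting $d = \|\tilde{w}-w^{*}\|_{2}$ and applying Cauchy--Schwarz to the strong convexity inequality yields
\begin{equation*}
\Psi(\tilde{w}) - \Psi(w^{*}) \le \|g\|_{2}\,d - \alpha_o d^{2} \le \max_{t\ge 0}\big(\|g\|_{2}\,t - \alpha_o t^{2}\big) = \frac{\|g\|_{2}^{2}}{4\alpha_o} \le \frac{\epsilon}{2} \le \epsilon,
\end{equation*}
which is \eqref{EQ:PhiIneq}. The main obstacle, and the only step that requires real care, is the sub-gradient construction: one must recognize that \eqref{EQ:ABSineq} is exactly the feasibility condition for zeroing out the off-support coordinates of a chosen sub-gradient, because otherwise there is no way to translate the componentwise bounds in the hypotheses into the Euclidean-norm bound on $\partial\Psi(\tilde{w})$ that strong convexity requires.
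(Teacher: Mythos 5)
Your proof is correct, and it takes a genuinely more direct route than the paper's. The paper first reformulates \eqref{Prog:WtElasticNet} as a quadratic program with auxiliary variables $d_{i}=|w_{i}|$, forms the Lagrangian \eqref{EQ:lagrangianQuad}, hand-builds multipliers $\lambda_{i},\lambda_{i+N}$ (via its Lemma \ref{LEM:2x2}) so that $\nabla_{w,d}L$ vanishes off the support and matches the on-support partials, and then invokes a strong-convexity estimate for $L$ in the joint $(w,d)$ variables (Lemma \ref{LEM:lemmaClose}). Your selection $v_{i}=-\frac{\partial}{\partial w_{i}}\big(w^{T}Rw-w^{T}\hat{\mu}\big)\big|_{\tilde{w}}$ for $i\nin\supp{\tilde{w}}$ --- admissible exactly by \eqref{EQ:ABSineq} because $\partial\big(\beta_{i}|\cdot|\big)(0)=[-\beta_{i},\beta_{i}]$ --- is precisely what the paper's multiplier pair encodes, but you apply it directly to $\partial\Psi(\tilde{w})$ and use $2\alpha_{o}$-strong convexity of $\Psi$ in $w$ alone. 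This buys two things: you bypass the QP reformulation and Lagrangian entirely, and because you do not split the strong-convexity modulus between the $w$ and $d$ blocks you keep the full modulus $\alpha_{o}$ and arrive at the sharper estimate $\Psi(\tilde{w})-\Psi(w^{*})\le \Elln{g}{2}^{2}/(4\alpha_{o})\le \epsilon/2$, whereas the paper's argument yields only $\epsilon$. What the paper's heavier setup buys in exchange is reuse: the Lagrangian machinery and Lemma \ref{LEM:lemmaClose} are established once and then exploited again in the proof of Theorem \ref{THM:ConvCriteria2}, so your approach would need a separate (though equally easy) perturbation argument there.
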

\begin{proof}
See Appendix.
\end{proof}

In a numerical algorithm it may happen that none of the portfolio weights are exactly 0, although they may be extremely close to zero.  Thus the above theorem may not be very practical for use as a stopping criterion.  For this reason let us separate the small portfolio weights from the larger portfolio weights.  To do this we define
\begin{equation*}
  \suppEpp{w}{\epsilon} = \left\{i \in \supp{w} : |w_{i}| < \epsilon \right\}.
\end{equation*}
With this definition we have the following corollary which suggests a more practical stopping rule than Theorem \ref{THM:ConvCriteria1}.
\begin{theorem}\label{THM:ConvCriteria2}
Let $M \ge 2\Elln{R}{2}$ and let $\epsilon >0$ be given.  Choose $\eta < \frac{\epsilon\wedge \sqrt{\epsilon \alpha_{o}}}{\sqrt{N}M}$.   Let $w^{*}$ be the solution the of \eqref{Prog:WtElasticNet}.   Suppose that $\tilde{w}$ satisfies
\begin{equation}\label{EQ:gradConditionPractA}
    \sum_{i \in \; \supp{\tilde{w}}\setminus \suppEpp{\tilde{w}}{\eta}}\left(\frac{\partial}{\partial w_{i}} \big( w^{T}Rw  -w^{T}\hat{\mu} +\EllnW{w}{\vect{\beta}}{1} \big)\Big\mid_{w=\tilde{w}}\right)^{2} \le 2\epsilon \alpha_{o}
\end{equation}
and
\begin{equation}\label{EQ:gradConditionPractB}
   -\beta_{i} +\epsilon  \le \frac{\partial}{\partial w_{i}} \big( w^{T}Rw  -w^{T}\hat{\mu} \big)\Big\mid_{w=\tilde{w}} \le \beta_{i} -\epsilon
\end{equation}
for $i \in \suppEpp{\tilde{w}}{\eta} \cup \overbar{\supp{\tilde{w}}}$.  Then
\begin{equation}\label{EQ:PhiIneqPract}
   \Psi(\zeta) \le \Psi(w^{*}) +\frac{(\sqrt{2}+1)^{2}}{2}\epsilon
\end{equation}
where
 \begin{equation*}
   \zeta_{i}=\begin{cases}
   0 \mbox{ if }  i \in \suppEpp{\tilde{w}}{\eta} \\
   \tilde{w}_{i} \mbox { else .} \\
   \end{cases}
 \end{equation*}
\end{theorem}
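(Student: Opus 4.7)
The plan is to reduce the claim to Theorem \ref{THM:ConvCriteria1} applied at the truncated iterate $\zeta$. By construction $\zeta$ agrees with $\tilde{w}$ on $\supp{\zeta}=\supp{\tilde{w}}\setminus\suppEpp{\tilde{w}}{\eta}$ and vanishes elsewhere, so $\sign(\zeta_i)=\sign(\tilde{w}_i)$ for every $i\in\supp{\zeta}$, and the displacement $\zeta-\tilde{w}$ is supported on $\suppEpp{\tilde{w}}{\eta}$ with each nonzero entry of magnitude less than $\eta$; hence $\Elln{\zeta-\tilde{w}}{2}\le\eta\sqrt{N}$. Note also that the complement of $\supp{\zeta}$ equals $\suppEpp{\tilde{w}}{\eta}\cup\overbar{\supp{\tilde{w}}}$, which is exactly the index set on which hypothesis \eqref{EQ:gradConditionPractB} is assumed.

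The first step is to propagate the smooth part of the gradient from $\tilde{w}$ to $\zeta$. Since $M\ge 2\Elln{R}{2}$, the choice of $\eta$ yields the twin bound
\[
\Elln{2R(\zeta-\tilde{w})}{2}\;\le\;M\,\eta\sqrt{N}\;<\;\epsilon\wedge\sqrt{\epsilon\alpha_o}.
\]
Because $\|\cdot\|_{\ell_\infty}\le\Elln{\cdot}{2}$, each component of $2R(\zeta-\tilde{w})$ has absolute value strictly less than $\epsilon$. The $\epsilon$-slack in hypothesis \eqref{EQ:gradConditionPractB} then absorbs this perturbation, so $\partialDer{w_i}(w^{T}Rw-w^{T}\hat{\mu})|_{w=\zeta}$ lies in $[-\beta_i,\beta_i]$ for every $i\notin\supp{\zeta}$. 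That is precisely condition \eqref{EQ:ABSineq} for $\zeta$.

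The second step verifies condition \eqref{EQ:gradCondition} at $\zeta$. For $i\in\supp{\zeta}$ the weighted $\ell_1$ sub-gradient term $\beta_i\sign(\zeta_i)=\beta_i\sign(\tilde{w}_i)$ is unchanged, so
\[
\partialDer{w_i}\bigl(w^{T}Rw-w^{T}\hat{\mu}+\EllnW{w}{\vect{\beta}}{1}\bigr)\Big|_{w=\zeta}=\partialDer{w_i}\bigl(\cdots\bigr)\Big|_{w=\tilde{w}}+\bigl[2R(\zeta-\tilde{w})\bigr]_{i}.
\]
Restricting both sides to the index set $\supp{\zeta}=\supp{\tilde{w}}\setminus\suppEpp{\tilde{w}}{\eta}$ and applying the $\ell_2$ triangle inequality with hypothesis \eqref{EQ:gradConditionPractA} on the first piece and the bound $\sqrt{\epsilon\alpha_o}$ from the displayed inequality on the second, I obtain an $\ell_2$ norm at most $\sqrt{2\epsilon\alpha_o}+\sqrt{\epsilon\alpha_o}=(\sqrt{2}+1)\sqrt{\epsilon\alpha_o}$. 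Squaring, the left side of \eqref{EQ:gradCondition} at $\zeta$ is bounded by $2\alpha_o\cdot\tfrac{(\sqrt{2}+1)^{2}}{2}\epsilon$, which is exactly the hypothesis of Theorem \ref{THM:ConvCriteria1} with $\epsilon$ replaced by $\tfrac{(\sqrt{2}+1)^{2}}{2}\epsilon$. Invoking that theorem delivers \eqref{EQ:PhiIneqPract}.

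The delicate point is the decomposition of the sub-gradient at $\zeta$ into a $\tilde{w}$-sub-gradient plus a linear $R$-perturbation while keeping the same $\ell_1$ sub-gradient selection; this works only because truncation preserves the signs on $\supp{\zeta}$. The constant $(\sqrt{2}+1)^{2}/2$ is the unavoidable price of this triangle inequality, and the simultaneous tightness of $\eta$ against both thresholds $\epsilon$ and $\sqrt{\epsilon\alpha_o}$ is essential: the first controls \eqref{EQ:ABSineq}, the second controls the $\ell_2$ condition.
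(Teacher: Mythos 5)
Your proposal is correct and follows essentially the same route as the paper: bound $\Elln{\zeta-\tilde{w}}{2}$ by $\eta\sqrt{N}$, push this through $2R$ to get a gradient perturbation of size at most $\epsilon\wedge\sqrt{\epsilon\alpha_{o}}$, absorb it into the $\epsilon$-slack of \eqref{EQ:gradConditionPractB} off the support and into the $\ell_{2}$ triangle inequality on the support, and invoke Theorem \ref{THM:ConvCriteria1} with $\epsilon$ replaced by $\tfrac{(\sqrt{2}+1)^{2}}{2}\epsilon$. In fact your write-up supplies the intermediate steps that the paper's terse ``It follows that'' leaves implicit.
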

\begin{proof}
See Appendix.
\end{proof}

\subsection{Split-Bregman Algorithm}
The weighted elastic net problem can be reformulated as a quadratic program and solved using general purpose solvers.  However the reformulation involves adding an additional $N$ primal variables as well as $2N$ dual variables.   Thus this approach may not be applicable to large scale problems.

An algorithm better suited to handle problems like \eqref{Prog:WtElasticNet} is the Split-Bregman algorithm.
The Split-Bregman algorithm was introduced in \cite{GoldStein1} for problems involving $\ell_{1}$ regularization such as \eqref{Prog:WtElasticNet}.
When using the Split-Bregman method to solve \eqref{Prog:WtElasticNet} we solve an equivalent problem
\begin{align}\label{SBreform1}
  \min_{w,d}& \;\; w^{T}R w -w^{T}\hat{\mu}+||d||_{\Ell{1}} \nonumber \\
  \textnormal{s.t.     }& d=\psi(w)
\end{align}
where $R=\rho \hat{\Gamma}+ D_{\alpha}$ and where $\psi(w)=(\beta_{1}w_{1}, \dots, \beta_{N}w_{N} )$.
The Split-Bregman algorithm applied to \eqref{SBreform1} is
\begin{algorithm}[H]
\begin{algorithmic}
\caption{Split Bregman Algorithm for solving \eqref{SBreform1}}
\label{ALGSplitGeneralized}
\STATE
\STATE  \textbf{Initialize: }$k=1$, $b^{k}=0,w^{k}=0, d^{k}=0 $
\WHILE{$||w^{k}-w^{k-1}||_{\Ell{2}}>tol$}
\STATE $w^{k+1}=\arg \min_{w} w^{T}Rw -w^{T}\hat{\mu} + \frac{\lambda}{2}||d^{k}- \psi(w) - b^{k}||_{\Ell{2}}^{2}$
\STATE $d^{k+1}=\arg \min_{d}  \frac{\lambda}{2}||d- \psi(w^{k+1}) - b^{k}||_{\Ell{2}}^{2}+||d||_{\Ell{1}}$
\STATE $b^{k+1}_{i}=b^{k}_{i} + \beta_{i} w^{k+1}_{i} - d^{k+1}_{i}$
\STATE $k=k+1$
\ENDWHILE
\end{algorithmic}
\end{algorithm}

Both inner optimization problems in Algorithm \ref{ALGSplitGeneralized} have closed form solutions.   The first problem is an unconstrained strictly convex quadratic program and the second problem can be solved using the shrinkage operator
\begin{equation*}
d^{k+1}_{j}=shrink(\beta_{j} w_{j}^{k+1} +b_{j}^{k},\frac{1}{\lambda})
\end{equation*}
where
\begin{equation*}
shrink(x,\gamma)=\frac{x}{|x|} \cdot \max (|x|-\gamma,0).
\end{equation*}

The stopping criterion in Algorithm \ref{ALGSplitGeneralized} does not ensure that the objective value is within a desired tolerance.    A modification to the algorithm can be made to ensure that this occurs.   One such modification uses Theorem \ref{THM:ConvCriteria2} to derive a stopping criterion.
\begin{algorithm}[H]
\begin{algorithmic}
\caption{Modified Split Bregman Algorithm for solving \eqref{SBreform1}}
\label{ALG:ModifiedSplit}
\STATE
\STATE  \textbf{Initialize: }$k=0$, $b^{k},w^{k}, d^{k}=|w^{k}|, tol>0 $
\WHILE{$w^{k}$ does not satisfy conditions of Theorem \ref{THM:ConvCriteria2} for $\epsilon =\frac{2}{(\sqrt{2}+1)^{2}}tol$ and $\tilde{w}=w^{k}$}
\STATE $w^{k+1}=\arg \min_{w} w^{T}Rw -w^{T}\hat{\mu} + \frac{\lambda}{2}||d^{k}- \psi(w)-b^{k}||_{\Ell{2}}^{2}$
\STATE $d^{k+1}=\arg \min_{d}  \frac{\lambda}{2}||d- \psi(w^{k+1})+b^{k}||_{\Ell{2}}^{2}+||d||_{\Ell{1}}$
\STATE $b^{k+1}_{i}=b^{k}_{i} + \beta_{i} w^{k+1}_{i} - d^{k+1}_{i}$
\STATE $k=k+1$
\ENDWHILE
\STATE \textbf{Output} $\zeta$ and $d^{k}$ where $\zeta$ is defined as in Theorem \ref{THM:ConvCriteria2} using $\epsilon =\frac{2}{(\sqrt{2}+1)^{2} }tol$ and $\tilde{w}=w^{k}$.
\end{algorithmic}
\end{algorithm}
By Theorem \ref{THM:ConvCriteria2} this algorithm ensures that the objective value is within $tol$ of the optimal value.

\subsection{Adaptive Support Split Bregman}\label{SEC:MultilevelSplit}
The first sub-problem in Algorithms \ref{ALGSplitGeneralized} and \ref{ALG:ModifiedSplit} involves solving a $N \times N$ system of equations.   When the number of assets is large completing this step becomes computational expensive.   This is especially true for financial data where the covariance matrix is ill-conditioned and dense.   Thus Algorithms \ref{ALGSplitGeneralized} and \ref{ALG:ModifiedSplit} may be impractical in applications where real-time results are required or computational performance is limited.
\begin{figure}[h]
    \centering
    \includegraphics[width=6in]{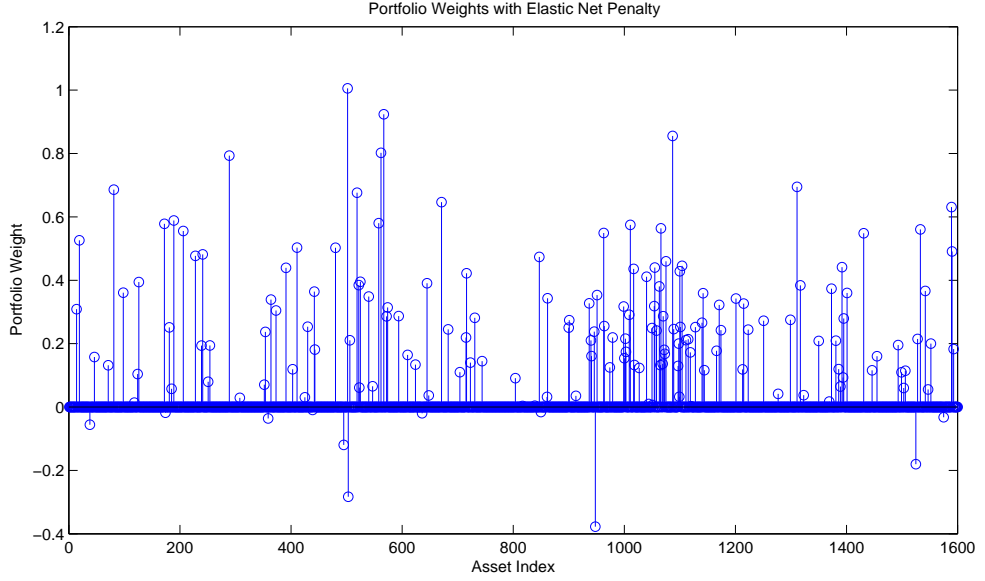}
    \caption{Elastic net penalty promotes sparsity in the portfolio weights}
    \label{fig:Sparse}
\end{figure}

It is well known \cite{SparseStable} that portfolio optimization problems with an $\Ell{1}$ regularization term can result in sparse portfolios i.e. the solution of \eqref{Prog:WtElasticNet} is only non-zero in a small number of indices.   Figure \ref{fig:Sparse} illustrates this behavior by showing the portfolio weights for 1600 assets obtained using the criterion in \ref{Prog:WtElasticNet}.   For this example less than $11 \%$ of the assets have a non-zero weight.

Sparsity of the portfolio weights can be exploited to reduce computational complexity.  To see this suppose $w^{*}$ solves \eqref{Prog:WtElasticNet} and $I=\supp{w^{*}}$ is known a priori (before computing the solution).   Then the problem \eqref{Prog:WtElasticNet} can be relaxed to the equivalent problem
\begin{align*}
  \min_{w}& \;\; w^{T}R_{|I} w-w^{T}\hat{\mu}_{|I}+\EllnW{w}{\vect{\beta}}{1}
\end{align*}
where $R_{|I}$ and $\mu_{|I}$ represent the covariance and mean restricted to $I$.   This problem is of dimension $|I|$ and requires fewer operations to compute per iteration.
This suggests that an Adaptive Support Split-Bregman Algorithm which attempts to solve \eqref{Prog:WtElasticNet} on smaller subspaces, $I$, where $\supp{w^{*}} \subset I$ can save computational time.

To develop an effective algorithm we first derive an optimality condition which can be used as a stopping criterion.
\begin{lemma}
\label{LEMMAstop}
$w^{*}$ solves \eqref{Prog:WtElasticNet} if and only if $|( 2Rw^{*})_{i} - \hat{\mu}_{i}| \le \beta_{i}$ for all $i \not\in \supp{w^{*}}$ and
$(2Rw^{*})_{i} - \hat{\mu}_{i} +\beta_{i}\sign{w^{*}_{i}} =0$ for all $i \in \supp{w^{*}}$.
\end{lemma}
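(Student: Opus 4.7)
The plan is to deduce the claim directly from the first-order subgradient optimality condition $0\in\partial\Psi(w^{*})$, which (as was noted in the paragraph preceding Theorem \ref{THM:ConvCriteria1}) is both necessary and sufficient for optimality because $\Psi$ is strictly convex when $R=\hat\Gamma+D_{\alpha}$ is positive definite. The argument is then just the standard coordinatewise unpacking of the subdifferential of a weighted $\ell_{1}$ norm, so the proof is really a routine exercise in convex analysis.

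First, I would split $\Psi$ into its smooth part $f(w)=w^{T}Rw-w^{T}\hat\mu$ and its nonsmooth part $g(w)=\EllnW{w}{\vect{\beta}}{1}=\sum_{i}\beta_{i}|w_{i}|$. Since $f$ is a differentiable convex function with gradient $\nabla f(w)=2Rw-\hat\mu$, the sum rule for subdifferentials gives
\begin{equation*}
\partial\Psi(w)=\nabla f(w)+\partial g(w)=(2Rw-\hat\mu)+\partial g(w).
\end{equation*}

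Next, because $g$ is separable and each weight $\beta_{i}>0$, the subdifferential decomposes coordinatewise as $\partial g(w)=\prod_{i}\partial(\beta_{i}|\cdot|)(w_{i})$, with
\begin{equation*}
\partial(\beta_{i}|\cdot|)(w_{i})=\begin{cases}\{\beta_{i}\sign{w_{i}}\}&\text{if }w_{i}\ne 0,\\[2pt] [-\beta_{i},\beta_{i}]&\text{if }w_{i}=0.\end{cases}
\end{equation*}
The condition $0\in\partial\Psi(w^{*})$ therefore amounts to the existence of a vector $g^{*}$ with $g^{*}_{i}=\beta_{i}\sign{w^{*}_{i}}$ for $i\in\supp{w^{*}}$ and $g^{*}_{i}\in[-\beta_{i},\beta_{i}]$ for $i\notin\supp{w^{*}}$, such that $(2Rw^{*})_{i}-\hat\mu_{i}+g^{*}_{i}=0$ for every $i$. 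Reading this coordinate by coordinate recovers exactly the two conditions in the lemma statement: equality $(2Rw^{*})_{i}-\hat\mu_{i}+\beta_{i}\sign{w^{*}_{i}}=0$ on the support, and the box inequality $|(2Rw^{*})_{i}-\hat\mu_{i}|\le\beta_{i}$ off the support.

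There really is no hard step here: the only things to check carefully are that the sum rule applies (it does, since $f$ is finite and smooth everywhere, so $\mathrm{dom}\,f=\mathbb R^{N}$ and the Moreau–Rockafellar hypothesis is trivial) and that both directions of the equivalence follow from $0\in\partial\Psi(w^{*})$ being simultaneously necessary and sufficient for the minimum of a strictly convex proper lower semicontinuous function. The slight bookkeeping obstacle, if any, is simply to be careful to treat the two index sets $\supp{w^{*}}$ and its complement separately when pulling the subgradient inclusion apart coordinatewise.
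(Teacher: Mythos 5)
Your proof is correct, and it reaches the lemma by a cleaner but genuinely different route than the paper. The paper argues both directions "by hand": for necessity it computes the one-sided directional derivatives of $\Psi$ along each coordinate direction (the partial derivative for $i \in \supp{w^{*}}$, and the two one-sided limits $\lim_{h \downarrow 0}$ and $\lim_{h \uparrow 0}$ of the difference quotient for $i \notin \supp{w^{*}}$, which must be $\ge 0$ and $\le 0$ respectively at a minimizer); for sufficiency it exhibits an explicit first-order lower bound on $\Psi(w)-\Psi(w^{*})$ valid for all $w$ with $\Elln{w-w^{*}}{\infty} < \min\{|w^{*}_{i}| : i \in \supp{w^{*}}\}$, showing each term is nonnegative, and then passes from local to global optimality by convexity. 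You instead invoke the Moreau--Rockafellar sum rule to write $\partial\Psi(w^{*}) = (2Rw^{*}-\hat{\mu}) + \partial g(w^{*})$ and the standard coordinatewise formula for the subdifferential of the separable weighted $\ell_{1}$ norm, so that both directions drop out simultaneously from the single equivalence ``$w^{*}$ minimizes $\Psi$ iff $0 \in \partial\Psi(w^{*})$.'' Your version is shorter and makes the box-constraint structure of the off-support condition transparent, at the cost of citing subdifferential calculus as a black box; the paper's version is more elementary and self-contained, needing only one-sided limits and a direct convexity inequality, which is arguably better suited to a finance audience. Both are complete proofs of the stated equivalence.
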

\begin{proof}
Suppose $w^{*}$ solves \eqref{Prog:WtElasticNet} and let $i \in \supp{w^{*}}$.   Then since $w^{*}$ is optimal and $w^{*}_{i} \ne 0$ the partial derivative of the objective function with respect to $w_{i}$ exists and is equal to 0.  Thus
\begin{eqnarray*}
 0&=& \frac{\partial}{\partial w_{i}} \Psi(w)|_{w=w^{*}} \\
  &=& 2(Rw^{*})_{i}  - \hat{\mu}_{i} + \beta_{i}\sign{w^{*}_{i}}.
\end{eqnarray*}
Now suppose $i \notin \supp{w^{*}}$.   Now the partial derivative of the objective function does not exist.  However by optimality we have
\begin{equation*}
0 \in \partial \Psi(w^{*})
\end{equation*}
Thus
\begin{equation*}
\lim_{h \downarrow 0} \frac{\Psi(w^{*}+h \delta_{i}) - \Psi(w^{*})}{h} \ge 0
\end{equation*}
and
\begin{equation*}
\lim_{h \uparrow 0} \frac{\Psi(w^{*}+h \delta_{i}) - \Psi(w^{*})}{h} \le 0
\end{equation*}
which imply
\begin{equation*}
( 2Rw^{*})_{i} - \hat{\mu}_{i} \ge -\beta_{i}
\end{equation*}
and
\begin{equation*}
( 2Rw^{*})_{i} - \hat{\mu}_{i} \le \beta_{i}.
\end{equation*}

For the converse suppose that $|( 2Rw^{*})_{i} - \hat{\mu}_{i}| \le \beta_{i}$ for all $i \not\in \supp{w^{*}}$ and
$(2Rw^{*})_{i} - \hat{\mu}_{i} +\beta_{i}\sign{w^{*}_{i}} =0$ for all $i \in \supp{w^{*}}$.   Choose $\epsilon=\min\{|w_{i}| : i \in \supp{w}\}$. Then for any $w$ such that $||w-w^{*}||_{\infty}<\epsilon$
\begin{eqnarray*}
  \Psi(w)-\Psi(w^{*}) &\ge& \sum_{i \in \supp{w^{*}}} \left( (2Rw^{*})_{i} - \hat{\mu}_{i} +\beta_{i}\sign{w^{*}_{i}}\right)(w_{i}-w^{*}_{i}) + \\
  && \;\;\;\;\;\; +  \sum_{i \not\in \supp{w^{*}}} \left( (2Rw^{*})_{i} - \hat{\mu}_{i} \right)w_{i} + \beta_{i}|w_{i}| \\
  &\ge& 0.
\end{eqnarray*}
Thus $w^{*}$ is locally optimal which implies global optimality.
\end{proof}

Lemma \ref{LEMMAstop} can be used to derive a criterion for determining which indices in a portfolio, $x$, belong in the support.   For example, suppose that $i \not \in \supp{x}$, and $|( 2Rx)_{i} - \hat{\mu}_{i}| > \beta_{i}$.   Then the objective function in \eqref{Prog:WtElasticNet} can be reduced by adding $i$ into $\supp{x}$.   Thus $x$ is not optimal and we should incorporate $i$ into $\supp{x}$.

Next we look at how to prolongate the Split Bregman variables $(w,d,b)$ from a lower dimensional space to a higher dimensional space.  Prolongation of $w$ and $d$ can be achieved through simple zero filling.   Prolongation of $b$ is more delicate.   The following Lemma suggests an effective prolongation.

\begin{lemma}
\label{LEMMAprolong}
Suppose ($w^{*}$,$d^{*}$)  is the solution of \eqref{SBreform1} obtained with Algorithm \ref{ALGSplitGeneralized}.  Then
\begin{equation}\label{EQProlong}
  \lim_{k \rightarrow \infty} b^{k}_{i}  =-(2Rw^{*} - \hat{\mu})_{i}/(\beta_{i} \lambda).
\end{equation}
\end{lemma}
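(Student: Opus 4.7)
The plan is to obtain an explicit expression for $b^{k}_{i}$ from the first-order optimality condition of the $w$-subproblem and then take the limit as $k \to \infty$. I first observe that problem \eqref{SBreform1} is a convex optimization problem with a strongly convex objective in $w$ (since $R \succ 0$) coupled by a linear constraint $d = \psi(w)$. Standard convergence results for the Split--Bregman / ADMM iteration therefore guarantee that $w^{k} \to w^{*}$ and $d^{k} \to d^{*}$, where $(w^{*},d^{*})$ is the unique solution. From the update rule $b^{k+1}_{i} = b^{k}_{i} + \beta_{i}w^{k+1}_{i} - d^{k+1}_{i}$ together with any boundedness/convergence argument for $b^{k}$, the increments go to zero in the limit, forcing $d^{*}_{i} = \beta_{i}w^{*}_{i}$, i.e.\ $d^{*} = \psi(w^{*})$.

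Next I would write the first-order optimality condition of the $w$-update in Algorithm \ref{ALGSplitGeneralized}. Differentiating
\[
w^{T}Rw - w^{T}\hat{\mu} + \tfrac{\lambda}{2}\|d^{k} - \psi(w) - b^{k}\|_{\ell_{2}}^{2}
\]
componentwise and setting the gradient equal to zero at $w = w^{k+1}$ gives
\[
2(Rw^{k+1})_{i} - \hat{\mu}_{i} - \lambda\beta_{i}\bigl(d^{k}_{i} - \beta_{i}w^{k+1}_{i} - b^{k}_{i}\bigr) = 0,
\]
which I can solve for $b^{k}_{i}$ to obtain
\[
b^{k}_{i} = d^{k}_{i} - \beta_{i}w^{k+1}_{i} - \frac{2(Rw^{k+1})_{i} - \hat{\mu}_{i}}{\lambda\beta_{i}}.
\]
Passing to the limit and using $w^{k+1} \to w^{*}$, $d^{k} \to d^{*} = \psi(w^{*})$ makes the first two terms cancel, leaving exactly
\[
\lim_{k\to\infty} b^{k}_{i} = -\frac{(2Rw^{*} - \hat{\mu})_{i}}{\lambda\beta_{i}},
\]
which is \eqref{EQProlong}. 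Notice that this approach also furnishes convergence of $b^{k}$ as a by-product of the convergence of the primal iterates, so I do not need an independent argument that $\{b^{k}\}$ is Cauchy.

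The main obstacle I anticipate is justifying convergence of the Split--Bregman iterates $w^{k}$ and $d^{k}$ cleanly. Since $R$ is positive definite, the $w$-subproblem is strictly convex with a unique minimizer, and the $d$-subproblem has the explicit shrinkage solution, so the iteration is a well-behaved ADMM; I would cite the standard convergence theorem (e.g.\ Eckstein--Bertsekas or the original Split--Bregman analysis of \cite{GoldStein1}) rather than reproving it. Once convergence of the primal sequences is in hand, the rest is an algebraic manipulation of the $w$-subproblem optimality condition and is essentially immediate.
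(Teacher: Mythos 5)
Your argument is correct and is essentially the paper's own proof: both start from the first-order optimality condition of the $w$-subproblem, $2(Rw^{k+1})_{i}-\hat{\mu}_{i}-\lambda\beta_{i}(d^{k}-\psi(w^{k+1})-b^{k})_{i}=0$, then pass to the limit using $w^{k}\to w^{*}$, $d^{k}\to d^{*}=\psi(w^{*})$ to isolate $\lim_{k}b^{k}_{i}$. The only difference is presentational (you solve for $b^{k}_{i}$ before taking limits, and you spell out the appeal to standard Split--Bregman/ADMM convergence, which the paper leaves implicit).
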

\begin{proof}
By Algorithm \ref{ALGSplitGeneralized} we have for all $k$
\begin{equation*}
   2(Rw^{k+1})_{i}-\hat{\mu}_{i} - \lambda (d^{k} - \psi(w^{k+1}) -b^{k} )_{i}\beta_{i} =0.
\end{equation*}
Since $\lim_{k \rightarrow \infty} w^{k} =w^{*}$ and $\lim_{k \rightarrow \infty} d^{k} =d^{*}$ and $d^{*}=\psi(w^{*}_{i})$ we have
\begin{equation*}
   \lim_{k \rightarrow \infty}   2( Rw^{k+1})_{i}-\hat{\mu}_{i}+  \lambda( b^{k} )_{i}\beta_{i} =0
\end{equation*}
 which implies that
\begin{equation*}
   \lim_{k \rightarrow \infty} (b^{k})_{i} = \frac{\hat{\mu}_{i} - 2(Rw^{*})_{i}}{\beta_{i} \lambda}.
\end{equation*}
\end{proof}

This suggests that the prolongation of $b$ can be defined from equation \eqref{EQProlong}.   For example suppose $(\tilde{w},\tilde{d},\tilde{b})$ solves \eqref{SBreform1} on a restricted domain $I \subset \{1,2, \dots N\}$ and let $w$ and $d$ represent the prolongation of $\tilde{w}$ and $\tilde{d}$ to a set $J \supset I$ i.e.
\begin{equation}
   w_{j}=\begin{cases} \tilde{w}_{j} &\mbox{ if $j \in I$} \\
   0 &\mbox{ if $j \in J-I$ }
   \end{cases}
\end{equation}
\begin{equation}
   d_{j}=\begin{cases} \tilde{d}_{j} &\mbox{ if $j \in I$} \\
   0 &\mbox{ if $j \in J-I$ }.
   \end{cases}
\end{equation}
Then taking a cue from equation \eqref{EQProlong} the prolongation of $\tilde{b}$ may be defined as
\begin{equation}
b_{i}=(-2R_{|J}w  + \hat{\mu}_{|J})_{i}/(\beta_{i} \lambda).
\end{equation}

The Adaptive Support Split Bregman Algorithm for solving \eqref{SBreform1} is given below.
\begin{algorithm}[H]
\caption{Adaptive Support Split Bregman Algorithm for solving \ref{SBreform1}}
\label{ALGSplitMulti}
\begin{algorithmic}
\STATE
\STATE  \textbf{Initialize: }$k=0, w^{0}=0,d^{0}=0,b^{0}=0, \epsilon > 0, M>0$
\STATE  Define $D^{0}=2Rw^{0} - \hat{\mu}$
\WHILE{$|D^{k}_{i}|>\beta_{i}$ for any $i \not\in \supp{w^{k}}$ AND $k < N$}
\STATE  Define the set $J^{k}=\{D^{k}_{i} : i \not\in  \supp{w^{k}}\}$
\STATE  Set $K=M \vee ( k+1 -|\supp{w^{k}}|)$
\STATE  Set $\tilde{J}^{k}$ equal to the largest $K$ elements in $J^{k}$
\STATE  Set $I^{k}=\tilde{J}^{k} \cup \supp{w^{k}}$
\STATE  Run Algorithm \ref{ALG:ModifiedSplit} on $I^{k}$ with initialization $w^{k}_{|I^{k}},b^{k}_{|I^{k}}$, $d^{k}_{I^{k}}$ and tolerance $\epsilon$
\STATE Set $(w^{k+1},d^{k+1})$ to the prolongation of output of previous step
\STATE Set $b^{k+1}_{i}=-2( Rw^{k+1} - \hat{\mu})_{i}/(\beta_{i} \lambda)$,
\STATE Set $D^{k+1}=2Rw^{k+1} - \hat{\mu}$
\STATE $k=k+1$
\ENDWHILE
\end{algorithmic}
\end{algorithm}

The next theorem shows that Algorithm \ref{ALGSplitMulti} converges.
\begin{theorem}
Let $w^{*}$ be the optimal solution to \eqref{Prog:WtElasticNet} and let $w'$ be a solution produced by Algorithm \ref{ALGSplitMulti} for $\epsilon=tol$.   Then
\begin{equation}
  \Psi(w') \le \Psi(w^{*}) + tol.
\end{equation}
\end{theorem}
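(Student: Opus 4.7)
The plan is to do a case analysis on how the while loop of Algorithm \ref{ALGSplitMulti} terminates. Let $k^*$ denote the terminal value of $k$ and let $v = w^{k^*}$ be the returned portfolio. Before splitting into cases, I would establish one structural observation: the choice $K = M \vee (k+1 - |\supp{w^k}|)$ forces $|I^k| \ge |\supp{w^k}| + (k+1-|\supp{w^k}|) = k+1$. Hence at most $N$ iterations are performed, and in iteration $k = N-1$ one necessarily has $I^{N-1} = \{1,\dots,N\}$.

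In the natural termination case, the while condition fails because $|D^{k^*}_i| \le \beta_i$ for all $i \notin \supp{v}$. I would apply Theorem \ref{THM:ConvCriteria1} to the full-space problem with $\tilde{w}=v$ and $\epsilon=tol$. Condition \eqref{EQ:ABSineq} is exactly the termination condition, so it is automatic. For the gradient condition \eqref{EQ:gradCondition}, I would use the inner call to Algorithm \ref{ALG:ModifiedSplit} on $I^{k^*-1}$: since $v$ vanishes outside $I^{k^*-1}$ and $\supp{v}\subseteq I^{k^*-1}$, the partial derivatives $\partial_i\Psi(v)$ on $\supp{v}$ agree with the partial derivatives of the restricted objective $\Psi_{|I^{k^*-1}}$ at $v_{|I^{k^*-1}}$. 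The inner algorithm certifies the hypotheses of Theorem \ref{THM:ConvCriteria2} on the restricted problem with internal tolerance $\tfrac{2}{(\sqrt{2}+1)^2}tol$, which after the standard translation back to Theorem \ref{THM:ConvCriteria1} gives the required bound on the full gradient sum.

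In the safeguard termination case $k^* = N$, the structural observation yields $I^{N-1} = \{1,\dots,N\}$, so the final inner invocation of Algorithm \ref{ALG:ModifiedSplit} was effectively a run on the full problem. Theorem \ref{THM:ConvCriteria2}, together with the choice of stopping tolerance inside Algorithm \ref{ALG:ModifiedSplit}, then directly gives $\Psi(v) \le \Psi(w^*) + tol$ and we are done.

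The main obstacle lies in Case 1: the output $v$ of Algorithm \ref{ALG:ModifiedSplit} is the thresholded vector $\zeta$, whereas the gradient guarantees of Theorem \ref{THM:ConvCriteria2} are stated at the pre-threshold iterate $\tilde w$. A perturbation argument is needed to transfer the gradient bound from $\tilde w$ to $\zeta$: since $\zeta - \tilde w$ is supported in $\suppEpp{\tilde w}{\eta}$ with entries of magnitude at most $\eta$, the difference in any partial derivative is controlled by $2\|R\|_2\sqrt{N}\eta$, and the specific choice $\eta < (\epsilon \wedge \sqrt{\epsilon\alpha_o})/(\sqrt{N}M)$ with $M\ge 2\|R\|_2$ forces this perturbation to lie below both thresholds appearing in Theorem \ref{THM:ConvCriteria1}. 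Once that bookkeeping is done, the two cases combine to give $\Psi(w') \le \Psi(w^*) + tol$ uniformly.
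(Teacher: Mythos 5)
Your proposal is correct and follows essentially the same route as the paper: a case split on whether the while loop terminates naturally (then the termination test supplies condition \eqref{EQ:ABSineq} on the full index set and the inner call to Algorithm \ref{ALG:ModifiedSplit} supplies the gradient condition, so Theorem \ref{THM:ConvCriteria1} applies with $\epsilon = tol$) or via the safeguard $k=N$ (then $I^{N-1}$ is the full index set and Theorem \ref{THM:ConvCriteria2} applies directly). The perturbation bookkeeping you flag for transferring the gradient bounds from $\tilde w$ to the thresholded $\zeta$ is exactly what the paper's proof of Theorem \ref{THM:ConvCriteria2} already carries out, which is why the paper's argument simply cites that proof.
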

\begin{proof}
By design the algorithm terminates after at most $N$ iterations.  Suppose the algorithm terminates in $k<N$ iterations.  Let $I^{(k)}$ be the support in iteration $k$ of the Adaptive Support Split-Bregman algorithm.  Then by the proof of Theorem \ref{THM:ConvCriteria2}, $w'$ satisfies the conditions of Theorem \ref{THM:ConvCriteria1} with $\epsilon=tol$.   Thus by Theorem \ref{THM:ConvCriteria1} $\Psi(w') < \Psi(w^{*}) + tol$.   Now suppose the algorithm terminates in $N$ iterations.  Since $I^{(N-1)}$ contains all asset indices it follows by the design of Algorithm \ref{ALG:ModifiedSplit} that
$\Psi(w') < \Psi(w^{*}) + tol$.
\end{proof}

To evaluate the execution speed of Adaptive Support Split-Bregman algorithm we performed a comparison with the following fast algorithms described in the literature:   Split-Bregman algorithm (Algorithm \ref{ALG:ModifiedSplit} ), FISTA \cite{FISTA} and Multilevel Iterated-Shrinkage \cite{MultiLevel}.   To the best of our knowledge these algorithms are considered state of the art for large-scale $\ell_{1}$-penalized quadratic programs.   For the multi-level algorithm proposed in \cite{MultiLevel} we use the FISTA \cite{FISTA} algorithm for all relaxations and lowest level solvers.  To make a fair comparison we have used the same error tolerance of $10^{-6}$ for each algorithm.

Tables \ref{TAB:SpeedBig} and \ref{TAB:SpeedSmall} presents MATLAB run times for solving \eqref{Prog:WtElasticNet} for a large and small basket of US stocks.
The machine running the simulation has the Windows 7 operating system and an Intel i7-3740 processor with 32.0 GB of RAM.

\begin{table}[H]
\scriptsize
\caption{Adaptive Support Split-Bregman converges quickly to a solution for sparse portfolios}
\label{TAB:SpeedBig}
\begin{center}
    \begin{tabular}{| l | l | l | l|l|l|}

    \hline
    Dimension & Sparsity & Adaptive Support  & Split-Bregman & FISTA & Multi-level  \\
    & Level &  Split-Bregman &  &  & FISTA \cite{MultiLevel}  \\ \hline
    2000 & 88 & 0.1 sec & 20.6 sec & 0.4 sec & 0.2 sec \\ \hline
    2000 & 142 & 0.2 sec & 14.5 sec & 0.8 sec &0.2 sec\\ \hline
    2000& 450 & 0.9 sec & 14.6 sec & 3.6 sec & 1.5 sec\\ \hline
    2000 & 853 & 4.8 sec & 23.0 sec & 8.8 sec& 9.2 sec \\ \hline
    2000 & 1692 & 10.4 sec & 38.0 sec & 21.4 sec & 22.7 sec\\ \hline
    3000 & 237 & 0.3 sec & 48.2 sec & 12.9 sec & 2.7 sec\\ \hline
    3000 & 805 & 1.3 sec & 49.9 sec & 55.7 sec & 24.6 sec\\ \hline
    4000 & 234 & 0.5 sec & 107.6 sec & 24.6 sec & 2.2 sec\\ \hline
    \end{tabular}
\end{center}
\end{table}
In Table \ref{TAB:SpeedBig} we see that the Adaptive Support Split-Bregman Algorithm converges much faster than both Split-Bregman, FISTA and Multi-Level FISTA for sparse portfolios taken from a large set of assets.   On the other hand Tables \ref{TAB:SpeedBig} and \ref{TAB:SpeedSmall} show that the advantage of the Adaptive Support Split Bregman algorithm decreases when the cardinality of the asset set is small or when the support of the portfolio is large.
\begin{table}[H]
\scriptsize
\caption{Benefit of Adaptive Support Split-Bregman decreases when dimensionality is small}
\label{TAB:SpeedSmall}
\begin{center}
    \begin{tabular}{| l | l | l | l|l|l|}

    \hline
    Dimension & Sparsity & Adaptive Support  & Split-Bregman & FISTA &Multi-level \\
    & Level &  Split-Bregman &  &  & FISTA \cite{MultiLevel} \\ \hline
     500 & 53 & 0.03 sec & 0.8 sec & 0.02 sec& 0.02 sec \\ \hline
    500 & 150 & 0.09 sec & 0.6 sec & 0.04 sec& 0.03 sec \\ \hline
    500 & 261 & 0.2 sec & 0.5 sec & 0.2 sec& 0.2 sec \\ \hline
    \end{tabular}
\end{center}
\end{table} 
\section{Experimental Results}
In this section we quantify the performance benefit of using a weighted elastic net penalty by testing our criterion in  \eqref{Prog:WtElasticNet} on daily return data from 630 U.S. stocks collected between January 1, 2001 and July 1, 2014 with market capitalization greater than 4 billion US dollars.   The results are then compared with other portfolio selection criteria described in Section \ref{Sec:Crit} and the naive equal-weighted portfolio.

In our experiments we compute new portfolios every 63 trading days using daily returns from the prior 252 trading days as training data for parameter estimation and calibration of the elastic net weights.   Our criteria for evaluating the portfolio performance is the out-of-sample Sharpe ratio of the daily portfolio returns.   Sharpe ratio is defined as the portfolio's excess return divided by its standard deviation.   The formula used for computing the Sharpe ratio is given below
\begin{equation}
  SR=\frac{\frac{1}{\tau}\sum_{i=1}^{\tau}w(t_{i})^{T}r(t_{i})}{\sqrt{\frac{1}{\tau}\sum_{i=1}^{\tau}\left(w(t_{i})^{T}r(t_{i})-\frac{1}{\tau}\left(\sum_{j=1}^{\tau}w(t_{j})^{T}r(t_{j})\right)\right)^{2}}}
\end{equation}
where $\tau$ is the total number of trading days in our 13.5 year data set.   Here $w(t_{i})$ is the portfolio on day $t_{i}$, which is computed from the previous set of training data and remains fixed over intervals of 63 trading days.
\subsection{Parameter Estimation and Calibration}
Due to the large number of assets and small amount of training data, estimation of the covariance and mean in our experiments is performed using shrinkage techniques \cite{SteinsParadox}.  We estimate the covariance matrix using the technique described in \cite{LedoitWolf2004B}.  In that paper the following shrinkage estimator for $\Gamma$ is proposed
\begin{equation}\label{ShrinkCovar}
\hat{\Gamma}=\rho_{1}\Gamma_{S} + \rho_{2}I
\end{equation}
where $\Gamma_{S}$ is the sample average covariance obtained from the training data and where $\rho_{1},\rho_{2}$ are $>0$.   In our experiments we use the optimal values of $\rho_{1}>0$ and $\rho_{2}>0$ which are derived in \cite{LedoitWolf2004B}.  Note that this choice of shrinkage target guarantees that $\hat{\Gamma}$ will be positive definite.

Since the weighted elastic net penalty consists of a squared weighted $\ell_{2}$ norm, the shrinkage in \eqref{ShrinkCovar} may appear to be redundant when applied with the weighted elastic net regularization in Section \ref{sec:WtElastic}.  However, this is not the case since the weights on the weighted elastic net and the shrinkage parameters in \eqref{ShrinkCovar} are adaptively selected according to different criteria.  Thus the covariance shrinkage target becomes a combination of the bootstrap derived target and the target derived according to \cite{LedoitWolf2004B}.   One benefit of this approach is that there will always be some level of $\ell_{2}$ regularization regardless of what the bootstrap criterion derives.

For estimation of the mean we employ a James-Stein estimator \cite{EfronMorris_SteinCompt,JamesSteinQuad} which was proposed for portfolio optimization in \cite{Jorion1986}.   When applying the James-Stein approach we compute the estimate of $\mu$ using the equation
\begin{equation}
\hat{\mu}=(1-\rho)\mu_{S} + \rho \eta\vect{1}.
\end{equation}
Here $\mu_{S}$ is the sample mean vector and $\eta$ is the maximum of average of the sample means and the daily historical return of the US stock market between 1928 and 2000 \cite{StockReturns}
\begin{equation}
\eta= \left( \frac{1}{N}\sum_{i=1}^{N}\mu_{S,i}\right) \vee 0.0004 .
\end{equation}
The value of $\rho$ is set according to \cite{Jorion1986} as
\begin{equation}
\rho=\min\left\{1, \frac{(N-2)}{T_{train}(\mu_{S} -\eta\vect{1})^{T}\hat{\Gamma}^{-1}(\mu_{S} -\eta\vect{1})}\right\}.
\end{equation}

The weights for the weighted elastic net penalty are calibrated using the bootstrap technique described in Section \ref{SEC:Cal} with identical estimation risk aversion factors for mean and squared volatility i.e. $p_{1}=p_{2}$.   Calibration of the weighted LASSO penalty is performed using the technique described in \cite{OptimalSparse}.    Since the weighted LASSO calibration in \cite{OptimalSparse} is only defined up to a constant we perform a parametric study for various constants.   Calibration of the elastic net penalty is handled using the technique described in section 1.6.2 of \cite{L1L2_london}.  The calibration method in \cite{L1L2_london} only determines the sum $\lambda_{1} +\lambda_{2}$ in \eqref{eq:ElasticEq}, the relative weighting of $\lambda_{1}$ and $\lambda_{2}$ is not addressed.   Thus we perform a parametric analysis over the relative weighting between the parameters $\lambda_{1}$ and $\lambda_{2}$ in the elastic net.   For SCAD there are no known calibration methods.   Hence for SCAD we perform a parametric study for various $\lambda$ values and a fixed $a_{SCAD}$ parameter of 3.7 as suggested in \cite{SCAD}.

\subsection{Sharpe Ratio performance}\label{sec:SRperform}
In this section we present performance results for the following 5 mean-variance criteria: 1) unpenalized 2) weighted elastic net penalized, 3) weighted LASSO penalized \cite{OptimalSparse}, 4) elastic net penalized \cite{YenYen}, and 5) SCAD penalized.   As a comparison case we also tested the $1/N$ equal weighted portfolio.
\begin{figure}[h]
    \centering
    \includegraphics[width=5in]{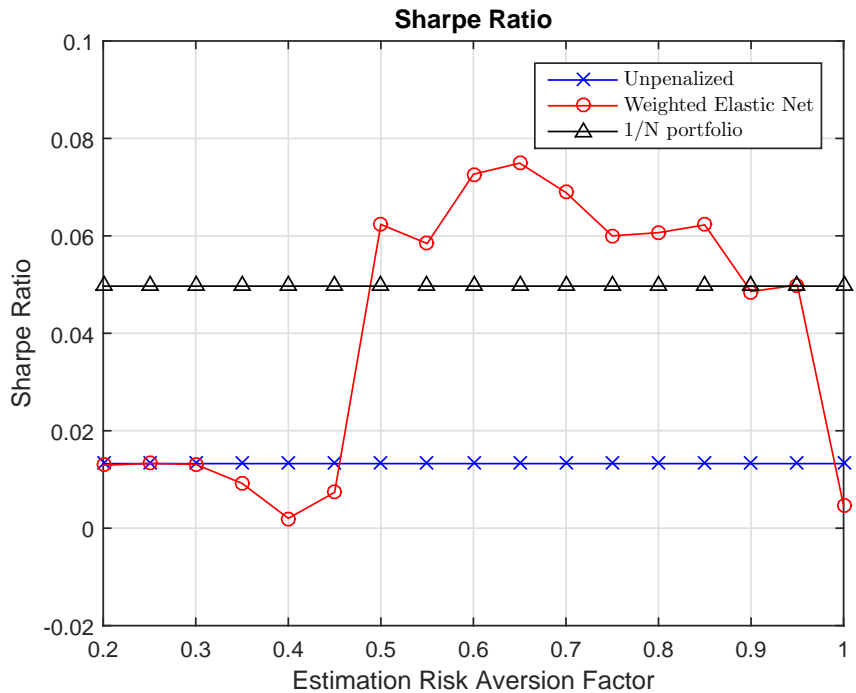}
    \caption{Weighted Elastic Net performance as a function of the estimation risk aversion factor}
    \label{fig:WtElasNet}
\end{figure}

In Figure \ref{fig:WtElasNet} we examine the Sharpe ratios of the weighted elastic net penalty as a function of estimation risk aversion factor, i.e. bootstrap percentile.   As a comparison the performance of the $1/N$ and unpenalized portfolio are also shown.  The figure demonstrates that the weighted elastic net penalized criterion and bootstrap calibration improves Sharpe ratio performance over the $1/N$ and unpenalized portfolio when the estimation risk aversion factor is between 0.5 and 0.95.   Outside of this interval the weighted elastic net penalty did not improve performance, which suggests that a moderate amount of estimation risk aversion is optimal.

\begin{figure}[h]
    \centering
    \includegraphics[width=5in]{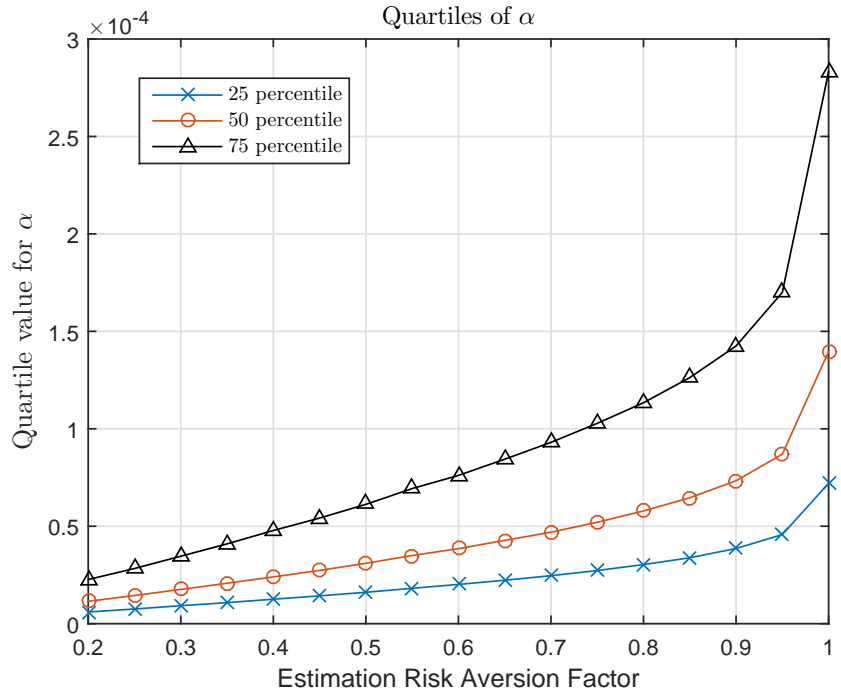}
    \caption{Quartiles for $\alpha$ weights as a function of the estimation risk aversion factor}
    \label{fig:QuartileAlpha}
\end{figure}

\begin{figure}[h]
    \centering
    \includegraphics[width=5in]{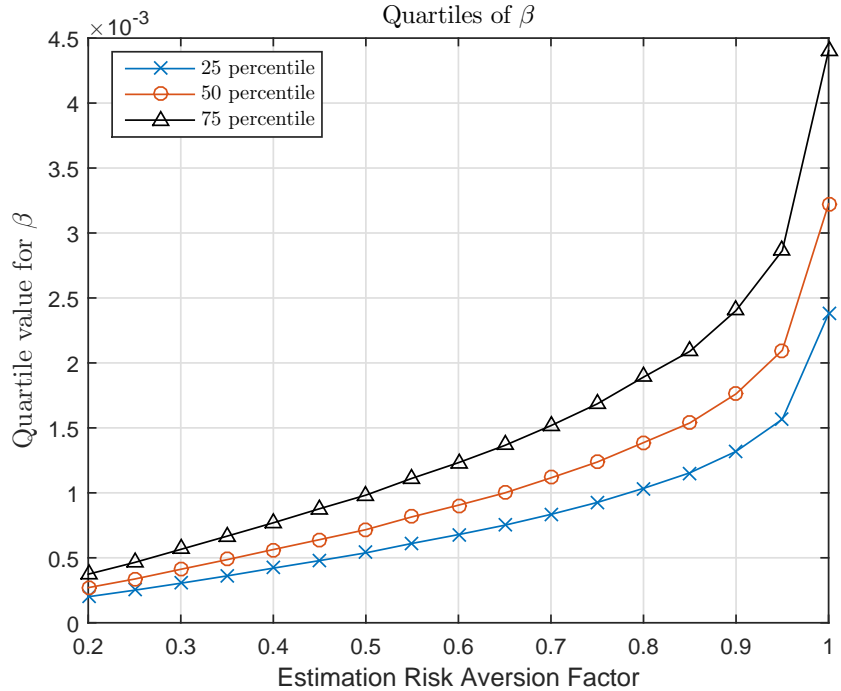}
    \caption{Quartiles for $\beta$ weights as a function of the estimation risk aversion factor}
    \label{fig:QuartileBeta}
\end{figure}

In Figures \ref{fig:QuartileAlpha} and \ref{fig:QuartileBeta} we present the quartiles of the $\alpha$ and $\beta$ parameters obtained from our bootstrap technique as a function of the estimation risk aversion factors.   The values increase sharply when moving from an aversion factor of 0.95 to 1.0.   This may explain the dramatic loss in performance from 0.95 to 1.0 in Figure \ref{fig:WtElasNet}.

For comparison purposes the Sharpe ratio of the weighted LASSO, elastic net and SCAD penalized portfolios are shown in Figures \ref{fig:WtLASSO}, \ref{fig:Elas}, and \ref{fig:SCAD} as a function of their respective penalty scaling parameter.   We see that both weighted LASSO and the elastic net do not perform as well as the weighted elastic net penalty.   This could be a consequence of their calibration being derived from a minimum variance perspective.   The SCAD penalized portfolio performs comparable to the weighted elastic net penalty if the $\lambda$ parameter is chosen correctly.   However, it is still an open question on how to automate the selection of an optimal $\lambda$ in the SCAD penalty for portfolio optimization problems.

\begin{figure}[h]
    \centering
    \includegraphics[width=5in]{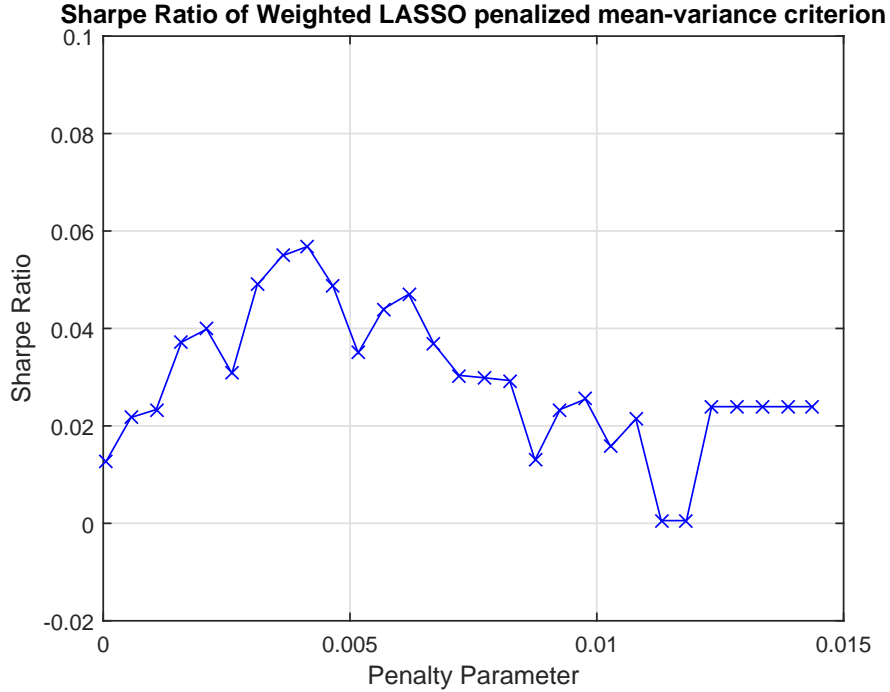}
    \caption{Weighted LASSO performance as a function of penalty normalization factor.   Calibration of relative weight values performed using technique in \cite{OptimalSparse}.}
    \label{fig:WtLASSO}
\end{figure}
\begin{figure}[h]
    \centering
    \includegraphics[width=5in]{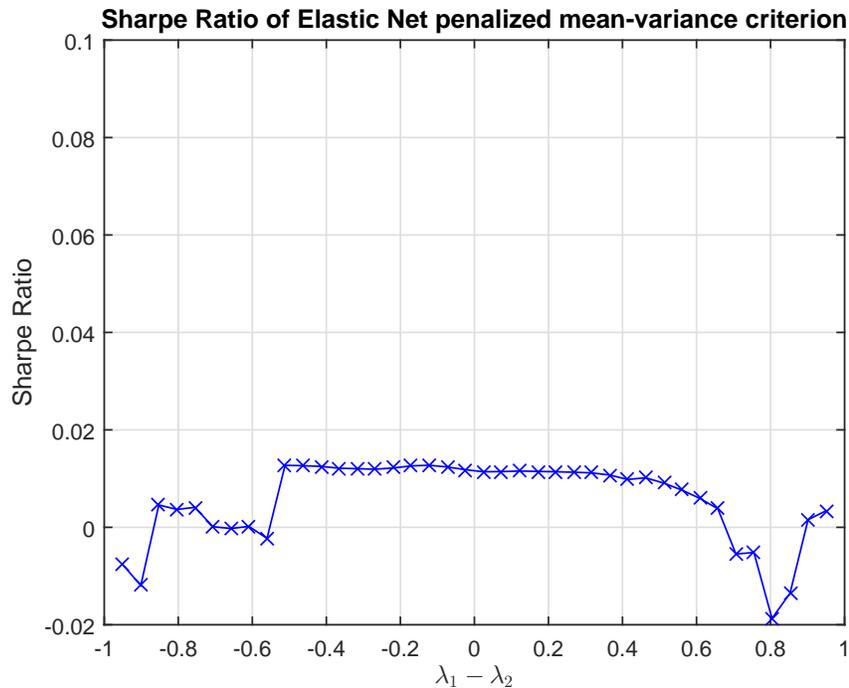}
    \caption{Elastic Net performance as a function of difference of the $\ell_{1}$ and squared $\ell_{2}$ weights.   Calibration performed using the technique in
    \cite{L1L2_london}}
    \label{fig:Elas}
\end{figure}
\begin{figure}[h]
    \centering
    \includegraphics[width=5in]{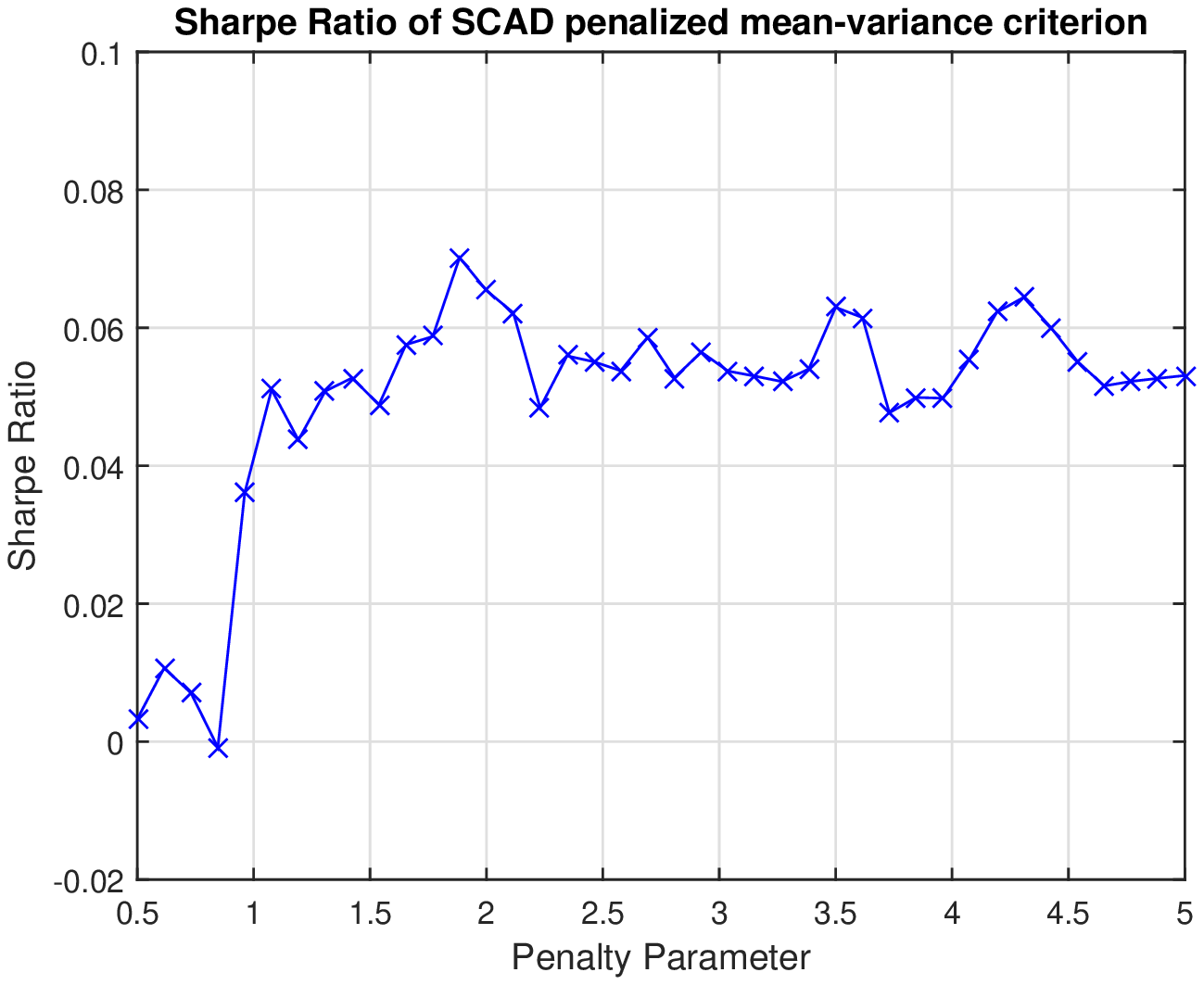}
    \caption{SCAD performance as a function of $\lambda$ parameter.   $a_{SCAD}$ parameter fixed to 3.7}
    \label{fig:SCAD}
\end{figure}

\section{Conclusions and Generalizations}
In this paper the addition of a weighted elastic net penalty to mean-variance objective function has been proposed in order to improve out-of-sample portfolio performance when parameter estimates are uncertain.   We have shown that this approach can be motivated by reformulating the mean-variance criterion as a robust optimization problem.   With this view we develop a data-driven criterion for calibration of the elastic net weights based on bootstrapping and an investor's aversion to model estimation risk.   To compute the portfolio weights efficiently we proposed a novel Adaptive Support Split-Bregman algorithm for solving our proposed optimization criterion.   This technique exploits the sparsity promoting properties of the weighted elastic net penalty to reduce computational requirements.

Our experimental results demonstrate that using the weighted elastic net penalty and calibration approach can result in higher out-of-sample Sharpe ratio than the other norm penalization techniques designed for minimum variance portfolios.  In addition, our MATLAB run-time results indicate that the proposed Adaptive Support Split-Bregman algorithm significantly reduces computation time compared with other algorithms such as Split-Bregman and FISTA.

An interesting question raised by this paper is whether the more general pairwise elastic net penalty in \eqref{Prog:Pairwise} will provide further performance enhancement than the weighted elastic net penalty.   The pairwise penalty appears promising since it is derived from a more flexible model where uncertainty in the off-diagonal of $\Gamma$ is allowed.   However the pairwise elastic net requires specification of up to $\frac{N(N-1)}{2}$ more uncertainty parameters than the weighted elastic net.   In addition numerical algorithms for computing solutions to \eqref{Prog:Pairwise} have not been extensively reported on in the literature.  We plan to investigate these questions in future work.

\section{Acknowledgements}
We would like to thank the reviewers for their valuable comments on our paper.   Their comments
led us to improve the quality of our work.
\appendix
\section{Proofs of Theorems \ref{THM:ConvCriteria1} and \ref{THM:ConvCriteria2}}
In this section we provide proofs for Theorems \ref{THM:ConvCriteria1} and \ref{THM:ConvCriteria2}.   To facilitate the proof we will first reformulate the criterion in \eqref{Prog:WtElasticNet} as a quadratic program.
\subsection{Quadratic Program Reformulation}
Problem \eqref{Prog:WtElasticNet} can be reformulated as a quadratic program with linear inequality constraints by introducing an auxiliary variable $d$,
\begin{align}\label{Prog:QuadReform}
  \min_{w,d}& \;\; \Phi(w,d)  \\
  \textnormal{s.t.   }& -d_{i} \le w_{i} \nonumber \\
  \textnormal{       }& -d_{i} \le -w_{i} \nonumber
\end{align}
where $\Phi(w,d)=w^{T}R w -w^{T}\hat{\mu}+\sum_{i=1}^{N}\beta_{i}d_{i}$ and where $R=\hat{\Gamma}+ D_{\alpha}$.   The Lagrangian for this problem
\begin{equation}\label{EQ:lagrangianQuad}
  L(w,d,\lambda)=w^{T}Rw -w^{T}\hat{\mu}+\sum_{i=1}^{N}\beta_{i}d_{i}  + \sum_{i=1}^{N}\lambda_{i}(-d_{i}-w_{i}) + \sum_{i=1}^{N}\lambda_{i+N}(-d_{i}+w_{i})
\end{equation}
plays an important role in our subsequent analysis in the next section.

\subsection{Approximate Optimality Proofs}
Here we prove Theorems \ref{THM:ConvCriteria1} and \ref{THM:ConvCriteria2} using the quadratic program reformulation \eqref{Prog:QuadReform}.
Our first task is to derive a lower bound on the Lagrangian for a fixed $\lambda$ and when $d=|w|$.   First note that
$R$ is symmetric positive definite whose smallest eigenvalue is $\ge \alpha_{o}$ where
\begin{equation*}
  \alpha_{o}=\min \left\{\alpha_{i}: 1 \le i \le N \right\}.
\end{equation*}
Thus for $d_{i}=|w_{i}|$,$\tilde{d_{i}}=|\tilde{w_{i}}|$ and $\lambda>0$ we have
\begin{eqnarray}\label{EQ:strongConvex}
  \Phi(w,d) &\ge& L(w,d,\lambda) \nonumber \\
   &=&  L(\tilde{w},\tilde{d},\lambda)+\nabla_{w}L(\tilde{w},\tilde{d},\lambda)^{T}(w-\tilde{w})\nonumber \\
  & & \;\;\;\;\;\;\;\;\;\;+ \nabla_{d}L(\tilde{w},\tilde{d},\lambda)^{T}(d-\tilde{d}) + (w-\tilde{w})^{T}H_{w}(\tilde{w},\tilde{d},\lambda)(w-\tilde{w})\nonumber \\
  &\ge& L(\tilde{w},\tilde{d},\lambda)+ \nabla_{w}L(\tilde{w},\tilde{d},\lambda)^{T}(w-\tilde{w})+ \nabla_{d}L(\tilde{w},\tilde{d},\lambda)^{T}(d-\tilde{d}) \nonumber \\
  & & \;\;\;\;\;\;\;\;\;\;\; +  \alpha_{o} \Elln{w-\tilde{w}}{2}^{2} \nonumber \\
  &\ge& L(\tilde{w},\tilde{d},\lambda)+ \nabla_{w}L(\tilde{w},\tilde{d},\lambda)^{T}(w-\tilde{w})+ \nabla_{d}L(\tilde{w},\tilde{d},\lambda)^{T}(d-\tilde{d})\nonumber \\
  & & \;\;\;\;\;\;\;\;\;\;\; + \frac{1}{2} \alpha_{o} \Elln{w-\tilde{w}}{2}^{2} + \frac{1}{2} \alpha_{o} \Elln{d-\tilde{d}}{2}^{2}
 \end{eqnarray}
 where $H_{w}$ is the Hessian of $L$ w.r.t to the $w$ variables.

 We now present two lemmas which will be useful in deriving a stopping criterion.   Our first lemma gives an upper bound for $L$ when the gradient of $L$ is small.
 \begin{lemma}\label{LEM:lemmaClose}
 Suppose $d_{i}=|w_{i}|$ for all $i$ and $\Elln{ \nabla_{w,d}L(\tilde{w},\tilde{d},\lambda)}{2} \le \sqrt{2\epsilon \alpha_{o}}$.   Then $L(\tilde{w},\tilde{d},\lambda) \le \Phi(w^{*},d^{*}) + \epsilon $ where $w^{*}$ solves \eqref{Prog:WtElasticNet} and $d^{*}_{i}=|w^{*}_{i}|$ for all $i$.
 \end{lemma}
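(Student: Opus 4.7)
The plan is to apply the strong convexity chain \eqref{EQ:strongConvex} with the comparison point $(w,d)$ set to the optimal pair $(w^*, d^*)$, and then to bound the resulting quadratic expression from below by completing the square. First I would observe that since $d^*_i = |w^*_i|$ and $\tilde{d}_i = |\tilde{w}_i|$, both pairs lie in the feasible set of \eqref{Prog:QuadReform} and $\Phi$ agrees with the Lagrangian $L(\cdot,\cdot,\lambda)$ up to nonnegative slack terms, so the $\Phi \ge L$ step that initiates \eqref{EQ:strongConvex} is valid at both points (for any dual $\lambda \ge 0$, which I take to be the fixed multiplier in the statement).

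Substituting $(w,d) = (w^*, d^*)$ into \eqref{EQ:strongConvex} and rearranging then gives
\begin{equation*}
\Phi(w^*, d^*) - L(\tilde{w}, \tilde{d}, \lambda) \;\ge\; g^{T}\Delta + \tfrac{1}{2}\alpha_{o}\,\|\Delta\|_{2}^{2},
\end{equation*}
where I collect $g := \nabla_{w,d} L(\tilde{w}, \tilde{d}, \lambda) \in \mathbb{R}^{2N}$ and $\Delta := (w^* - \tilde{w},\, d^* - \tilde{d})$. The right-hand side is a convex quadratic in $\Delta$, minimized over $\mathbb{R}^{2N}$ at $\Delta = -g/\alpha_{o}$ with minimum value $-\|g\|_{2}^{2}/(2\alpha_{o})$; this minimum therefore serves as a universal lower bound regardless of the actual value of $\Delta$.

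Combining the two inequalities gives $L(\tilde{w}, \tilde{d}, \lambda) - \Phi(w^*, d^*) \le \|g\|_{2}^{2}/(2\alpha_{o})$. Substituting the hypothesis $\|g\|_{2} \le \sqrt{2\epsilon \alpha_{o}}$ makes the right-hand side at most $\epsilon$, which is the desired bound. The main obstacle is purely bookkeeping: verifying that the $\Phi \ge L$ inequality in \eqref{EQ:strongConvex} is legitimately used at both endpoints $(\tilde{w}, \tilde{d})$ and $(w^*, d^*)$, so that the second-order expansion with constant $\alpha_{o}$ applies cleanly — once that is in place, the rest is a textbook completion-of-squares argument and no further structure of $R$, $\hat{\mu}$ or $\beta$ is needed.
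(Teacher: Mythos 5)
Your proposal is correct and follows essentially the same route as the paper's own proof: both apply the strong-convexity bound \eqref{EQ:strongConvex} at the comparison point $(w^{*},d^{*})$, minimize the resulting quadratic in the displacement at $-\nabla_{w,d}L(\tilde{w},\tilde{d},\lambda)/\alpha_{o}$, and conclude $L(\tilde{w},\tilde{d},\lambda)-\Phi(w^{*},d^{*})\le \Elln{\nabla_{w,d}L(\tilde{w},\tilde{d},\lambda)}{2}^{2}/(2\alpha_{o})\le\epsilon$. No substantive differences.
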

 \begin{proof}
 By equation \eqref{EQ:strongConvex} we have
  \begin{eqnarray*}
  \Phi(w^{*},d^{*}) \ge L(w^{*},d^{*},\lambda) &\ge& L(\tilde{w},\tilde{d},\lambda)+ \nabla_{w}L(\tilde{w},\tilde{d})^{T}(w^{*}-\tilde{w})+ \nabla_{d}L(\tilde{w},\tilde{d})^{T}(d^{*}-\tilde{d}) \\
  & & \;\;\;\;\;\;\;\;\;\; + \frac{1}{2}\alpha_{o} \Elln{w^{*}-\tilde{w}}{2}^{2} + \frac{1}{2} \alpha_{o} \Elln{d^{*}-\tilde{d}}{2}^{2}.
 \end{eqnarray*}
 The righthand side is minimized by substituting $-\frac{1}{\alpha_{o}}\nabla_{d}L(\tilde{w},\tilde{d},\lambda)$ in for $(d^{*}-\tilde{d})$ and $-\frac{1}{\alpha_{o}}\nabla_{w}L(\tilde{w},\tilde{d},\lambda)$ in for $(w^{*}-\tilde{w})$.  With these substitutions we obtain
 \begin{eqnarray*}
 \Phi(w^{*},d^{*}) &\ge& L(\tilde{w},\tilde{d},\lambda)-\frac{1}{2 \alpha_{o}} \Elln{ \nabla_{w,d}L(\tilde{w},\tilde{d},\lambda)}{2}^{2} \\
  &\ge&  L(\tilde{w},\tilde{d},\lambda)-\epsilon.
 \end{eqnarray*}
  \end{proof}
The next lemma can be verified easily.
\begin{lemma}
\label{LEM:2x2}
Suppose $|a| \le b$.   Then there exist $x_{1},x_{2} \ge 0$ such that
\begin{eqnarray*}
  x_{1}+x_{2} &=&  b \\
  -x_{1}+x_{2} &=& a.
\end{eqnarray*}
\end{lemma}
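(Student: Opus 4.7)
The plan is to solve the two prescribed linear equations directly for $x_1$ and $x_2$, and then verify that the hypothesis $|a|\le b$ is exactly what is needed to make both variables non-negative. Since the system is $2\times 2$ with a unique solution, the proof reduces to a short explicit construction followed by two sign checks.

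First, I would add the two equations to get $2x_2 = a+b$, which gives the candidate $x_2 = (a+b)/2$, and subtract them to get $2x_1 = b-a$, which gives $x_1 = (b-a)/2$. By construction, this pair satisfies $x_1+x_2 = b$ and $-x_1+x_2 = a$, so the equational part is immediate.

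Next, I would verify non-negativity using the hypothesis. The inequality $|a|\le b$ is equivalent to the two bounds $-b \le a \le b$; in particular $b \ge 0$. From $a \le b$ we get $b-a \ge 0$, hence $x_1 \ge 0$, and from $-b \le a$, equivalently $-a \le b$, we get $b+a \ge 0$, hence $x_2 \ge 0$. This completes the verification.

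There is no genuine obstacle here: the content of the lemma is simply the observation that the condition $|a|\le b$ is precisely what is required to express $a$ as the difference $x_2 - x_1$ of two non-negative reals whose sum is $b$. The only point worth emphasizing is that the hypothesis must be read as encoding \emph{both} one-sided inequalities $a \le b$ and $-a \le b$, each of which is consumed by one of the two non-negativity checks.
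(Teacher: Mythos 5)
Your proof is correct: the explicit solution $x_1=(b-a)/2$, $x_2=(a+b)/2$ together with the two sign checks from $-b\le a\le b$ is exactly the intended verification. The paper itself gives no proof (it only remarks that the lemma ``can be verified easily''), and your argument supplies that easy verification in the standard way.
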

\subsubsection{Proof of  Theorem \ref{THM:ConvCriteria1}}
We are now ready to prove Theorem \ref{THM:ConvCriteria1} which establishes a condition for approximate optimality of a portfolio under the weighted elastic net criterion \eqref{Prog:WtElasticNet}.
\begin{proof} {\emph{of Theorem \ref{THM:ConvCriteria1}}} \\
Choose $d^{*}$ and $\tilde{d}$ such that $d^{*}_{i}=|w^{*}_{i}|$ and $\tilde{d}_{i}=|\tilde{w}_{i}|$.   For $i \in \supp{\tilde{w}}$ define $\lambda$ such that
 \begin{equation*}
   \lambda_{i}=\begin{cases}
   0 \mbox{ if } w_{i}>0, i \in \supp{\tilde{w}} \\
   \beta_{i} \mbox { if } w_{i}<0, i \in \supp{\tilde{w}} \\
   \end{cases}
 \end{equation*}
 and for $i \in \supp{\tilde{w}}$,define $\lambda_{i+N}=\beta_{i}-\lambda_{i}$.

 For $i \nin \supp{\tilde{w}}$ we want to define $\lambda_{i}$ and $\lambda_{i+N}$ such that $\lambda_{i} \ge 0$, $\lambda_{i+N} \ge 0$,
 \begin{equation}\label{EQ:lamPluslam}
   \lambda_{i}+ \lambda_{i+N} = \beta_{i}
 \end{equation}
and
 \begin{equation}\label{EQ:lamMinuslam}
-\lambda_{i}+ \lambda_{i+N}= -\frac{\partial}{\partial w_{i}} \big( w^{T}Rw  -w^{T}\hat{\mu} \big)\Big\mid_{w=\tilde{w}}.
\end{equation}
By Lemma \ref{LEM:2x2}, equation \eqref{EQ:ABSineq} implies that such a $\lambda_{i},\lambda_{i+N}$ exists.

Let us form the Lagrangian $L(w,d,\lambda)$ as in equation \eqref{EQ:lagrangianQuad}.   Then for $i \in \supp{\tilde{w}}$
\begin{equation*}
  \partialDer{w_{i}} L(w,d,\lambda)_{|(\tilde{w},\tilde{d})} = \frac{\partial}{\partial w_{i}} \big( w^{T}Rw  -w^{T}\hat{\mu} +\EllnW{w}{\vect{\beta}}{1} \big)\Big\mid_{w=\tilde{w}}
\end{equation*}
and
\begin{equation*}
\partialDer{d_{i}} L(w,d,\lambda)_{|(\tilde{w},\tilde{d})}=0.
\end{equation*}
For $i \nin \supp{\tilde{w}}$ we have by equation \eqref{EQ:lamMinuslam}
\begin{equation*}
  \partialDer{w_{i}} L(w,d,\lambda)_{|(\tilde{w},\tilde{d})} = 0
\end{equation*}
and by equation \eqref{EQ:lamPluslam}
\begin{equation*}
  \partialDer{d_{i}} L(w,d,\lambda)_{|(\tilde{w},\tilde{d})} = 0.
\end{equation*}
It then follows from equation \eqref{EQ:gradCondition} that
\begin{equation*}
\Elln{ \nabla_{w,d}L(\tilde{w},\tilde{d},\lambda)}{2} \le \sqrt{2\epsilon \alpha_{o}}
\end{equation*}
and so by Lemma \ref{LEM:lemmaClose} and our choice of $\lambda$ we have that
\begin{eqnarray*}
\Phi(\tilde{w},\tilde{d}) &=& L(\tilde{w},\tilde{d},\lambda) \\
&\le & \Phi(w^{*},d^{*}) + \epsilon.
\end{eqnarray*}
This clearly implies that
\begin{eqnarray*}
\Psi(\tilde{w}) \le  \Psi(w^{*}) + \epsilon.
\end{eqnarray*}
\end{proof}
\subsubsection{Proof of  Theorem \ref{THM:ConvCriteria2}}
Now we prove Theorem \ref{THM:ConvCriteria2} which can be used to establish a more practical convergence criterion than Theorem \ref{THM:ConvCriteria1}.
\begin{proof}{\emph{of Theorem \ref{THM:ConvCriteria2}}}\\
By construction $\Elln{\zeta-\tilde{w}}{\infty} \le \Elln{\zeta-\tilde{w}}{2} \le \frac{\epsilon\wedge \sqrt{\epsilon \alpha_{o}}}{M}$.  It follows that
 \begin{equation*}
    \sum_{i \in \supp{\zeta}}\left(\frac{\partial}{\partial w_{i}} \big( w^{T}Rw  -w^{T}\hat{\mu} +\EllnW{w}{\vect{\beta}}{1} \big)\Big\mid_{w=\zeta}\right)^{2} \le (\sqrt{2}+1)^{2}\alpha_{o}\epsilon
\end{equation*}
and
\begin{equation*}
-\beta_{i} \le \frac{\partial}{\partial w_{i}} \big( w^{T}Rw  -w^{T}\hat{\mu} \big)\Big\mid_{w=\zeta} \le \beta_{i}
\end{equation*}
for all $i \nin \supp{\zeta}$.
So by Theorem \ref{THM:ConvCriteria1} we have that $\zeta$ satisfies \eqref{EQ:PhiIneqPract}.
\end{proof}

\bibliographystyle{siam}
\bibliography{proposal}

\begin{thebibliography}{10}

\bibitem{Barry1974}
{\sc C.~Barry}, {\em {Portfolio Analysis under Uncertain Means,Variances and
  Covariances}}, Journal of Finance, 29 (1974), pp.~515--522.

\bibitem{FISTA}
{\sc A.~Beck and M.~Teboulle}, {\em {A Fast Iterative Shrinkage-Thresholding
  Algorithm for Linear Inverse Problems}}, SIAM Imaging Scienses, 2 (2009),
  pp.~183--202.

\bibitem{BoydConvex}
{\sc S.~Boyd and L.~Vandenberghe}, {\em Convex Optimization}, Cambridge Univ.
  Press, 2004.

\bibitem{SparseStable}
{\sc J.~Brodie, I.~Daubechies, C.~Gunturk, Y.~Wang, and Y.~O.}, {\em {S}parse
  and {S}table {M}arkowitz {P}ortfolios}, PNAS, 106 (2009), pp.~12267--12272.

\bibitem{HeroRobust}
{\sc Y.~Chen, A.~Wiesel, and A.~Hero}, {\em {Robust Shrinkage Estimation of
  High-dimensional Covariance Matrices}}, {IEEE Transactions on Signal
  Processing}, 59 (2011), pp.~4097--4107.

\bibitem{StockReturns}
{\sc A.~Damodaran}, {\em {Annual Returns on Stock, T.Bonds and T.Bills: 1928 -
  Current}}.
\newblock http://www.stern.nyu.edu/~adamodar/pc/datasets/histretSP.xls.

\bibitem{DeMiguelPORTNORM}
{\sc V.~DeMiguel, L.~Garlappi, F.~Nogales, and R.~Uppal}, {\em A generalized
  approach to portfolio optimization: Improving performance by constraining
  portfolio norms}, Management Science,  (2009), pp.~798--812.

\bibitem{DeMiguelOneOverN}
{\sc V.~DeMiguel, L.~Garlappi, and R.~Uppal}, {\em {Optimal Versus Naive
  Diversification: How Inefficient is the 1{/}N Portfolio Strategy?}}, The
  Review of Financial Studies, 22 (2009), pp.~1915--1953.

\bibitem{EfronMorris_SteinCompt}
{\sc B.~Efron and C.~Morris}, {\em {Stein's Estimation Rule and its
  Competitors- An Empirical Bayes Approach}}, Journal of American Statistical
  Association, 68 (1973), pp.~117--130.

\bibitem{SteinsParadox}
{\sc B.~Efron and C.~Morris}, {\em {Stein's Paradox in Statistics}}, Scientific
  American, 237 (1977), pp.~119--127.

\bibitem{EfronBoot}
{\sc B.~Efron and R.~Tibshirani}, {\em Bootstrap methods for standard
  errors,confidence intervals, and other measures of statistical accuracy},
  Statistical Science, 1 (1986), pp.~54--75.

\bibitem{FamaFrench1993}
{\sc E.~Fama and K.~French}, {\em {Common Risk Factors in the Returns on Stocks
  and Bonds}}, Journal of Financial Economics, 33 (1993), pp.~3--56.

\bibitem{SCAD}
{\sc J.~Fan and R.~Li}, {\em Variable selection via nonconcave penalized
  likelihood and its oracle properties}, Journal of the American Statistical
  Association, 96 (2001), pp.~1348--1360.

\bibitem{VastPortfolio}
{\sc J.~Fan, J.~Zhang, and K.~Yu}, {\em {Vast Portfolio Selection With
  Gross-Exposure Constraints}}, Journal of the American Statistical
  Association, 107 (2012), pp.~592--607.

\bibitem{OptimalSparse}
{\sc B.~Fastrich, S.~Paterlini, and P.~Winker}, {\em Constructing optimal
  sparse portfolios using regularization methods}, Computational Management
  Science, 12 (2014), pp.~417--434.

\bibitem{Frost1986}
{\sc P.~Frost and J.~Savarino}, {\em {An Empirical Bayes Approach to Efficient
  Portfolio Selection}}, The Journal of Financial and Quantitative Analysis, 21
  (1986), pp.~293--305.

\bibitem{GoldFarbIyengar}
{\sc D.~Goldfarb and G.~Iyengar}, {\em {Robust Portfolio Selection Problems}},
  Mathematics of Operations Research, 28 (2003), pp.~1--38.

\bibitem{GoldStein1}
{\sc T.~Goldstein and S.~Osher}, {\em {T}he {S}plit {B}regman {M}ethod for
  $\ell_{1}$ {R}egularized problems}, SIAM Journal on Imaging Sciences, 2
  (2009), pp.~323--343.

\bibitem{InteriorSPD}
{\sc B.~Halldorsson and R.~Tutuncu}, {\em {An Interior Point Method for a Class
  of Saddle Point Problems}}, Journal of Optimization Theory and Applications,
  116 (2003), pp.~559--590.

\bibitem{Ma}
{\sc R.~Jagannathan and T.~Ma}, {\em Risk reduction in large portfolios: Why
  imposing the wrong constraints helps}, Journal of Finance, 58 (2003),
  pp.~1651--1684.

\bibitem{JamesSteinQuad}
{\sc W.~James and C.~Stein}, {\em Estimation with quadratic loss}, in
  Proceedings of the fourth Berkeley symposium on mathematical statistics and
  probability, 1961, pp.~361--379.

\bibitem{Jobson1980}
{\sc J.~Jobson and B.~Korkie}, {\em {Estimation for Markowitz Efficient
  Portfolios}}, Journal of the American Statistical Association, 75 (1980),
  pp.~544--554.

\bibitem{Jorion1986}
{\sc P.~Jorion}, {\em {Bayes-Stein Estimation for Portfolio Analyis}}, The
  Journal of Financial and Quantitative Analysis, 21 (1986), pp.~279--292.

\bibitem{LedoitWolf2004B}
{\sc O.~Ledoit and M.~Wolf}, {\em {A well-conditioned estimator for
  large-dimensional covariance matrices }}, Journal of Multivariate Analysis,
  88 (2004), pp.~365--411.

\bibitem{LedoitWolf2004}
\leavevmode\vrule height 2pt depth -1.6pt width 23pt, {\em {Honey, I Shrunk the
  Sample Covariance Matrix}}, The Journal of Portfolio Management, 30 (2004),
  pp.~110--119.

\bibitem{SparseStableParameter}
{\sc J.~Li}, {\em Sparse and stable portfolio selection with parameter
  uncertainty}, Journal of Business and Economic Statistics, To Appear.

\bibitem{LittermanCovar}
{\sc R.~Litterman and K.~Winkelmann}, {\em {Estimating Covariance Matrices}},
  Risk Management Series,  (1998).
\newblock Goldman Sachs.

\bibitem{PairwiseElastic}
{\sc A.~Lorert, D.~Eis, V.~Kostina, and P.~Ramadge}, {\em Exploiting covariate
  similarity in sparse regression via the pairwise elastic net}, in Proceedings
  of the 13th International Conference on Artificial Intelligence and
  Statistics, vol.~9, Chia Laguna Resort, Sardinia, Italy., 2010.

\bibitem{Markowitz}
{\sc H.~Markowitz}, {\em Portfolio selection}, Journal of Finance, 7 (1952),
  pp.~77--91.

\bibitem{MertonMean}
{\sc R.~Merton}, {\em {On estimating the expected return on the market: An
  exploratory investigation}}, Journal of Financial Economics, 8 (1980),
  pp.~323--361.

\bibitem{MichaudEnigma}
{\sc R.~Michaud}, {\em {The Markowitz optimization enigma: Is optimized
  optimal?}}, Finan, Anal. J., 45 (1989), pp.~31--42.

\bibitem{MichaudBook}
\leavevmode\vrule height 2pt depth -1.6pt width 23pt, {\em {Efficient Assset
  Management: A Practial Guide to Stock Portfolio Optimization}}, Oxford Univ.
  Press, New York, 1993.

\bibitem{NocedalBook}
{\sc J.~Nocedal and S.~Wright}, {\em Numerical Optimization}, Springer, 1999.

\bibitem{Sharpe1963}
{\sc W.~Sharpe}, {\em {A Simplified Model for Portfolio Analysis}}, Management
  Science, 9 (1963), pp.~277--293.

\bibitem{PalomarTyler}
{\sc Y.~Sun, P.~Babu, and D.~Palomar}, {\em Regularized tyler’s scatter
  estimator: Existence, uniqueness, and algorithms}, IEEE Transactions on
  Signal Processing, 62 (2014), pp.~5143--5156.

\bibitem{TutKoenig}
{\sc
  R.~T$\ddot{\textnormal{u}}$t$\ddot{\textnormal{u}}$nc$\ddot{\textnormal{u}}$
  and M.~Koenig}, {\em {Robust asset allocation}}, Annals of Operations
  Research, 132 (2004), pp.~157--187.

\bibitem{MultiLevel}
{\sc E.~Treister and I.~Yavneh}, {\em {A Multilevel Iterated-Shrinkage Approach
  to $\ell_{1}$ Penalized Least-Squares Minimization}}, {IEEE Trans. Signal
  Proc.}, 60 (2012), pp.~6319--6329.

\bibitem{L1L2_london}
{\sc Y.~Yen}, {\em Three essays in financial econometrics}, PhD thesis,
  University of London, 2012.

\bibitem{YenYen}
{\sc Y.~Yen and T.~Yen}, {\em Solving norm constrained portfolio optimization
  via coordinate-wise descent algorithms}, Computational Statistics and Data
  Analysis, 76 (2014), pp.~737--759.

\bibitem{AdaptiveLasso}
{\sc H.~Zou}, {\em The adaptive lasso and its oracle properties}, Journal of
  the American Statistical Association, 101 (2006), pp.~1418--1429.

\bibitem{AdaptiveElasticNet}
{\sc H.~Zou and H.~Zhang}, {\em {On the Adaptive Elastic-Net with a Diverging
  Number of Parameters}}, The Annals of Statistics, 37 (2009), pp.~1733--1751.

\end{thebibliography}

\end{document}